\Crefname{observation}{Observation}{Observations}
\Crefname{algorithm}{Algorithm}{Algorithms}
\Crefname{section}{Sect.}{Sects.}
\Crefname{observation}{Observation}{Observations}
\Crefname{lemma}{Lemma}{Lemmas}
\Crefname{lemma2}{Lemma}{Lemmas}
\Crefname{theorem2}{Theorem}{Theorems}
\Crefname{claim}{Claim}{Claims}
\Crefname{claimx}{Claim}{Claims}
\Crefname{figure}{Fig.}{Figs.}
\Crefname{enumi}{Condition}{Conditions}
\Crefname{property}{Property}{Properties}
\Crefname{assumption}{Assumption}{Assumptions}
\theoremstyle{definition}
\newcommand{\qedclaim}{\hfill $\blacksquare$}
\let\oldendproof\endproof
\renewcommand\endproof{~\hfill$\qed$\oldendproof}
\begin{document}

\title{New Bounds on the Local and Global Edge-length Ratio of Planar Graphs}


\authorrunning{Di Giacomo et al.}


 \author{Emilio Di Giacomo\inst{1} \and 
 Walter Didimo\inst{1} \and
 Giuseppe Liotta\inst{1} \and
 Henk Meijer\inst{2}\and
 Fabrizio Montecchiani\inst{1}\and
 Stephen Wismath\inst{3}}

 \institute{Università degli Studi di Perugia, Perugia, Italy\\
 \email{\{walter.didimo,emilio.digiacomo,giuseppe.liotta,fabrizio.mentecchiani\}@unipg.it} \and
 University College Roosevelt, The Netherlands\\
 \email{henkm.personal@gmail.com} \and
 University of Lethbridge, Canada\\
 \email{wismath@uleth.ca}}

\maketitle

\begin{abstract}
The \emph{local edge-length ratio} of a planar straight-line drawing $\Gamma$ is the largest ratio between the lengths of any pair of edges of $\Gamma$ that share a common vertex. The \emph{global edge-length ratio} of $\Gamma$ is the largest ratio between the lengths of any pair of edges of $\Gamma$. The local (global) edge-length ratio of a planar graph is the infimum over all local (global) edge-length ratios of its planar straight-line drawings. We show that there exist planar graphs with $n$ vertices whose local edge-length ratio is $\Omega(\sqrt{n})$. We then show a technique to establish upper bounds on the global (and hence local) edge-length ratio of planar graphs and~apply~it to Halin graphs and to other~families~of~graphs~having~outerplanarity~two. 

\end{abstract}

\section{Introduction}\label{se:intro}
Let $G=(V,E)$ be a graph and let $\Gamma$ be a planar straight-line drawing of $G$. We denote by $|e|_{\Gamma}$ the length of the segment representing an edge $e$ in $\Gamma$. The \emph{global edge-length ratio of $\Gamma$}, denoted as $\rho_g(\Gamma)$, is the maximum ratio between the lengths of any two edges in $\Gamma$, while the \emph{local edge-length ratio of $\Gamma$}, denoted as $\rho_\ell(\Gamma)$, is the maximum ratio between the lengths of two adjacent edges. Formally, 

\begin{eqnarray*}
\rho_g(\Gamma) = \max_{(u,v),(z,w)\in E} \frac{|(u,v)|_{\Gamma}}{|(z,w)|_{\Gamma}}, \quad \quad
\rho_{\ell}(\Gamma) = \max_{(u,v),(v,w)\in E} \frac{|(u,v)|_{\Gamma}}{|(v,w)|_{\Gamma}}.
\end{eqnarray*}

The \emph{global edge-length ratio of $G$}, denoted as $\rho_g(G)$, is the infimum of $\rho_g(\Gamma)$ over all planar straight-line drawings $\Gamma$ of $G$, i.e., $\rho_g(G) = \inf_{\Gamma \in \mathcal{D}(G)} \rho_g(\Gamma)$, where $\mathcal{D}(G)$ is the set of all planar straight-line drawing of $G$. Analogously, the \emph{local edge-length ratio of $G$}, denoted as $\rho_{\ell}(G)$, is the infimum of $\rho_{\ell}(\Gamma)$ over all planar straight-line drawings $\Gamma$ of $G$, i.e., $\rho_{\ell}(G) = \inf_{\Gamma \in \mathcal{D}(G)} \rho_{\ell}(\Gamma)$.

Eades and Wormald~\cite{DBLP:journals/dam/EadesW90} prove that deciding when biconnected planar graphs have edge-length ratio one is NP-hard, and Cabello et al.~\cite{DBLP:journals/jgaa/CabelloDR07} extend this result to  triconnected instances. 
These results have motivated further studies on the existence of graph families for which the local or the global edge length ratio~does~not depend on the size of the graph. Concerning lower bounds, Borrazzo and Frati~\cite{DBLP:journals/jocg/BorrazzoF20} prove that the global edge length ratio of $n$-vertex planar 3-trees is $\Omega(n)$ and Bla\v{z}ej et al.~\cite{DBLP:journals/ijcga/BlazjFL21} prove an $\Omega(\log n)$ lower bound for $n$-vertex planar 2-trees. As for upper bounds, Lazard et al.~\cite{DBLP:journals/tcs/LazardLL19} show that every outerplanar graph has global edge length ratio $2$ and that for every $\varepsilon > 0$ there exists an outerplanar graph with global edge-length ratio larger than $2 - \varepsilon$. For the $n$-vertex planar 2-trees Borrazzo and Frati show that the global edge-length ratio is in $O(n^{0.695})$, while
 Bla\v{z}ej et al.~\cite{DBLP:journals/ijcga/BlazjFL21} prove that the local edge-length ratio of planar 2-trees is at most $4$. 
 
Note that the lower bound on the global edge-length ratio of the $n$-vertex planar 3-trees does not imply a similar lower bound for their local edge-length ratio. Also, no constant upper bound is known for the global edge-length ratio of graphs with outerplanarity larger than one. Our contribution is twofold:

\begin{itemize}
\item  We prove that the local edge-length ratio of the $n$-vertex planar 3-trees is $\Omega(\sqrt{n})$; see \cref{se:Lower}.
     
\item We  show graph families of outerplanarity two whose global edge-length ratio is upper bounded by a constant, namely the Halin graphs, the outerplanar-cycle graphs (which generalize the cycle-cycle graphs), and the outerplanar-caterpillar graphs (which generalize the cycle-caterpillar graphs); see \cref{se:Upper},  
\end{itemize}

\smallskip Our approach for the lower bound is inspired by ideas~of Borrazzo and Frati~\cite{DBLP:journals/jocg/BorrazzoF20}. The upper bounds are proved with a general approach which translates the problem of computing drawings with bounded global edge length ratio to a topological question about (a variant of) level planarity with limited edge span. We remark that Halin graphs, cycle-cycle graphs, and cycle-tree graphs are well-known subjects of study in the graph drawing literature; see, e.g.~\cite{DBLP:journals/dcg/AngeliniBBKMRS18,DBLP:conf/iisa/BekosKR16,DBLP:conf/wads/ChaplickLGLM21,DBLP:conf/isaac/LozzoDEJ17,DBLP:journals/comgeo/GiacomoLM05}.  


\section{Preliminaries}\label{se:preli}

Given a graph $G$ a \emph{drawing} of $G$ maps every vertex of $G$ to a distinct point in the plane, and every edge $e$ of $G$ to a Jordan arc connecting the two points representing the end-vertices of $e$ and not passing through any other vertex. A drawing is \emph{straight-line} if its edges are straight-line segments, and it is \emph{planar} if no two edges intersect, except possibly at a common endpoint. A \emph{planar graph} is a graph that admits a planar drawing. A planar drawing subdivides the plane into topologically connected regions, called \emph{faces}. The infinite region is called the \emph{outer face}; every other face is called an \emph{inner face}. A \emph{planar embedding} of a planar graph is an equivalence class of topologically equivalent planar drawings of $G$. A planar embedding of a connected planar graph can be described by the  circular order of the edges around each vertex together with the choice of the outer face. A planar graph with a given planar embedding is a \emph{plane graph}. The cycle delimiting the outer face of a plane graph is called the \emph{outer boundary} of the graph.  

A planar drawing is \emph{outerplanar} if all its vertices belong to the outer boundary,
and \emph{$2$-outerplanar} if removing the vertices of the outer boundary we obtain an outerplanar graph. A graph is \emph{$2$-outerplanar} (\emph{outerplanar}) if it admits a $2$-outerplanar drawing (outerplanar drawing).

A \emph{$k$-span weakly level planar drawing} ({\em k-SWLP drawing}, for short) $\Gamma$ is a planar drawing whose vertices are drawn on a set of horizontal equispaced lines, called \emph{levels} and whose edges are drawn as straight-line segments each intersecting at most $k+1$ levels. To fix the ideas in the following descriptions we assume that the levels are numbered from top to bottom. An edge that intersects $k+1$ levels has \emph{span} $k$. If no edge of $\Gamma$ has span $0$, $\Gamma$ is a \emph{$k$-span level planar drawing (k-SLP drawing, for short)}. A graph is \emph{$k$-SWPL} if it admits a $k$-SWLP drawing and it is \emph{$k$-SPL} if it admits a $k$-SLP drawing. 

Given a degree-2 vertex $v$ of a graph $G$ with incident edges $(u,v)$ and $(v,w)$, to \emph{smooth $v$} means to remove $v$ and its incident edges from $G$ and replace them with the edge $(u,w)$.

\section{Lower Bound } \label{se:Lower}

We show that for every $n \geq 9$ there exists a planar $3$-tree $G$ with $n$ vertices such that $\rho_{\ell}(G) \in \Omega(\sqrt{n})$. A \emph{planar $3$-tree} is a planar graph that it is either a triangle or that can be obtained by connecting a new vertex to the vertices of a triangular face of a planar $3$-tree.  

We define a family of plane $3$-trees $G_k$ for $k \geq 1$, such that each $G_k$ has $n=2k+1$ vertices. The graph $G_1$ is a $3$-cycle $C_1$. Assume that a plane $3$-tree $G_{k-1}$ has been defined and that its outer face is a $3$-cycle $C_{k-1}$ whose vertices are denoted as $a$, $b_{k-1}$, and $c_{k-1}$; the plane $3$-tree $G_k$ is obtained by adding two vertices $b_k$ and $c_k$, and the edges $(a,b_k)$, $(a,c_k)$, $(b_{k},c_k)$, $(b_k,b_{k-1})$, $(c_k,c_{k-1})$, and $(c_k,b_{k-1})$, embedded as shown in \Cref{fi:Gk.a}.  
Notice that $G_k$ is a plane $3$-tree and it has $2k+1$ vertices. Let $\Gamma$ be a planar straight-line drawing of $G_k$ that preserves the embedding of $G_k$. For $i=1,2,\dots,k$, we denote by $\Delta_i$ the triangle that represents the cycle $C_i$ in $\Gamma$ and by $p(\Delta_i)$ its perimeter. We can assume, without loss of generality, that the length of the shortest edge over all triangles  $\Delta_i$ is $1$; if not, we can scale uniformly the drawing so to achieve this condition. The following lemma is proved by Borrazzo and Frati~\cite[pp.140-142]{DBLP:journals/jocg/BorrazzoF20}. 

\begin{lemma}[\cite{DBLP:journals/jocg/BorrazzoF20}]\label{le:geom}
Let $\Gamma$ be a planar straight-line drawing of the plane $3$-tree $G_k$ that preserves its planar embedding; then $p(\Delta_i) > p(\Delta_{i-1})+\gamma$, with $\gamma=0.3$.
\end{lemma}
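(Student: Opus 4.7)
The plan is to exploit two features of the configuration: the planar embedding forces $\Delta_{i-1}\subset\Delta_i$ with $a$ as their only common vertex, and the normalization that every side of every $\Delta_j$ has length at least $1$. The embedding of $G_i$ triangulates the annular region $\Delta_i\setminus\Delta_{i-1}^\circ$ by four inner faces of $G_i$: the two ``ear'' triangles $ab_{i-1}b_i$ and $ac_{i-1}c_i$ meeting at $a$, together with the two triangles $b_ib_{i-1}c_i$ and $b_{i-1}c_{i-1}c_i$, which share the diagonal $(b_{i-1},c_i)$ and tile the quadrilateral bounded by the chords $(b_{i-1},c_{i-1})$ and $(b_i,c_i)$.

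My first step would be to apply the reverse triangle inequality in the relevant inner faces to bound each side of $\Delta_i$ from below. In the ear triangle $ab_{i-1}b_i$ this yields $|ab_i|\geq |ab_{i-1}|-|b_{i-1}b_i|$, and analogously $|ac_i|\geq |ac_{i-1}|-|c_{i-1}c_i|$; chaining the reverse triangle inequality along the polyline $b_i\to b_{i-1}\to c_{i-1}\to c_i$ through the two quadrilateral faces gives $|b_ic_i|\geq |b_{i-1}c_{i-1}|-|b_{i-1}b_i|-|c_{i-1}c_i|$. Summing these three bounds yields $p(\Delta_i)\geq p(\Delta_{i-1})-2(|b_{i-1}b_i|+|c_{i-1}c_i|)$, which is not yet useful because the detour lengths could be large. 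However, the classical convex-geometric fact that the perimeter of a convex region contained in another convex region is no larger already gives $p(\Delta_i)\geq p(\Delta_{i-1})$; the task is to strengthen this to a strict gap of at least $\gamma=0.3$.

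To close the gap, I would combine the unit lower bounds on the six sides of $\Delta_i\cup\Delta_{i-1}$ with the law of cosines applied in the two ear triangles at $a$: these relate the detour lengths $|b_{i-1}b_i|$ and $|c_{i-1}c_i|$ to the ``extra'' angles $\theta_b=\angle b_{i-1}ab_i$ and $\theta_c=\angle c_{i-1}ac_i$ (strictly positive because the planar embedding places $b_{i-1}$ and $b_i$, as well as $c_{i-1}$ and $c_i$, on distinct rays from $a$), which quantify how much the angle at $a$ in $\Delta_i$ exceeds that in $\Delta_{i-1}$. The main obstacle is extracting the precise constant $\gamma=0.3$: this requires a delicate case analysis on the relative sizes of the six sides and on which edge realizes the unit minimum, and I would expect the constant to arise from a worst-case optimization in which $\Delta_{i-1}$ sits as close as possible to $\Delta_i$ subject to all length and embedding constraints.
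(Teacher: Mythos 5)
First, a point of context: the paper does not actually prove \cref{le:geom} --- it is imported from Borrazzo and Frati with a pointer to pp.~140--142 of their paper --- so there is no in-paper argument to compare yours against, and your attempt must stand on its own. Your reading of the combinatorial structure is correct (the annulus $\Delta_i\setminus\Delta_{i-1}$ is tiled by the two ear triangles at $a$ and the two triangles sharing the diagonal $(b_{i-1},c_i)$), and so are your two preliminary observations: the chained triangle inequalities, and the monotonicity of perimeter under inclusion of convex bodies, which gives $p(\Delta_i)\ge p(\Delta_{i-1})$. But the entire content of the lemma is the additive constant $\gamma=0.3$, and that is precisely the step you do not carry out: you list ingredients (law of cosines in the ear triangles, unit lower bounds on the six sides of $\Delta_{i-1}$ and $\Delta_i$, strict positivity of the angles $\theta_b,\theta_c$) and defer to ``a delicate case analysis'' and ``a worst-case optimization''. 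This is not merely an omitted computation: no optimization over the constraints you list can produce a positive constant.

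To see this concretely, place $a$ at the origin, let $\Delta_{i-1}$ be equilateral with unit sides, and obtain $\Delta_i$ by scaling $\Delta_{i-1}$ about $a$ by a factor $1+\delta$ while rotating $b_i$ and $c_i$ outward about $a$ by a small angle $\eta$ (say $\delta=\eta$). Then $b_{i-1}$ and $c_{i-1}$ lie strictly inside $\Delta_i$, the four faces of the annulus are realized planarly with the prescribed embedding, all six sides of $\Delta_{i-1}$ and $\Delta_i$ have length at least $1$, and $\theta_b=\theta_c=\eta>0$ --- yet $p(\Delta_i)-p(\Delta_{i-1})=O(\delta+\eta)$ can be made arbitrarily small. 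Every hypothesis you invoke is satisfied while the conclusion fails, so your plan cannot close the gap. The missing ingredient is a lower bound on the three \emph{connecting} edges $(b_{i-1},b_i)$, $(c_{i-1},c_i)$, $(b_{i-1},c_i)$, which in the example have length $O(\delta+\eta)$. Borrazzo and Frati work in the global edge-length-ratio setting, where the shortest edge of the \emph{entire} drawing --- in particular each connecting edge --- is normalized to have length $1$, and their constant $\gamma=0.3$ is extracted from that stronger hypothesis (the present paper also silently uses ``the shortest edge of $\Gamma$ has length $1$'' later, in the proof of \cref{le:fixed-embedding}, even though it states the normalization only for the triangle edges). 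Any correct proof must bring the connecting-edge bounds into play; as written, yours does not.
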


We first prove a lower bound on $\rho_{\ell}(G_k)$ that holds if we consider only drawings that preserve the planar embedding of $G_k$. We then remove this restriction.

\begin{figure}[htbp]
    \centering
	\includegraphics[width=0.3\textwidth,page=1]{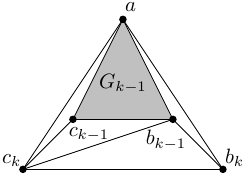}
    \caption{(a) The graph $G_k$.}
    \label{fi:Gk.a}
\end{figure}

\begin{restatable}{lemma}{leFixedEmbedding}\label{le:fixed-embedding}
Let $\Gamma$ be an embedding preserving planar straight-line drawing of the plane $3$-tree $G_k$ ; then $\rho_{\ell}(\Gamma) \geq \sqrt{\frac{k}{20}}$.
\end{restatable}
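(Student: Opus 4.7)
The plan is to exploit the fact that the vertex $a$ belongs to every triangle $\Delta_i$, so that all the $2k$ edges of the form $(a,b_i)$ and $(a,c_i)$ are pairwise adjacent at $a$ in $\Gamma$. I will locate a short edge in some inner triangle $\Delta_j$ and a long edge incident to $a$ in $\Delta_k$, and relate their lengths through at most two consecutive adjacencies.

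First, I iterate \Cref{le:geom} to get $p(\Delta_k) > p(\Delta_1) + 0.3(k-1) \geq 3 + 0.3(k-1)$, where $p(\Delta_1) \geq 3$ because the normalization forces every edge of every $\Delta_i$ to have length at least~$1$. Applying the triangle inequality inside $\Delta_k$ yields $|(b_k,c_k)|_\Gamma \leq |(a,b_k)|_\Gamma + |(a,c_k)|_\Gamma$, hence $p(\Delta_k) \leq 4 \max(|(a,b_k)|_\Gamma, |(a,c_k)|_\Gamma)$, and therefore at least one of the edges $(a,b_k), (a,c_k)$ has length greater than $p(\Delta_k)/4 > 3k/40$.

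Next, let $e^*$ be a shortest triangle edge, with $|e^*|_\Gamma = 1$, and let $\Delta_j$ be the triangle containing it. I split on whether $e^*$ is incident to $a$. If $e^* \in \{(a,b_j),(a,c_j)\}$, then $e^*$ and the long edge constructed above are both incident to $a$ and hence adjacent, giving $\rho_\ell(\Gamma) > 3k/40$, which dominates $\sqrt{k/20}$. If instead $e^* = (b_j,c_j)$, then $e^*$ is adjacent to $(a,b_j)$ at $b_j$, so $|(a,b_j)|_\Gamma \leq \rho_\ell(\Gamma)$. A second adjacency at $a$ between $(a,b_j)$ and the long edge (say $(a,b_k)$) then yields $|(a,b_k)|_\Gamma \leq \rho_\ell(\Gamma) \cdot |(a,b_j)|_\Gamma \leq \rho_\ell(\Gamma)^2$, where I use $\rho_\ell(\Gamma) \geq 1$ to cover the degenerate direction of the max/min inequality. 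Combining with the lower bound on the long edge of $\Delta_k$ gives $\rho_\ell(\Gamma)^2 > 3k/40 \geq k/20$, and hence $\rho_\ell(\Gamma) \geq \sqrt{k/20}$.

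The main (and essentially only) obstacle is the case $e^* = (b_j,c_j)$, which requires the two-step chain of adjacencies $e^* \to (a,b_j) \to (a,b_k)$ through $a$. It is precisely these two hops that convert the linear-in-$k$ perimeter growth guaranteed by \Cref{le:geom} into a square-root-in-$k$ lower bound on $\rho_\ell(\Gamma)$, and the argument relies crucially on $a$ being shared by every $\Delta_i$, so that two adjacencies suffice to bridge all $k$ levels of nesting.
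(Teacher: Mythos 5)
Your proof is correct and follows essentially the same route as the paper: iterate \Cref{le:geom} to make $p(\Delta_k)$ grow linearly in $k$, extract a long edge of $\Delta_k$ incident to the apex $a$, and bridge it to the unit-length shortest triangle edge via at most two adjacencies through $a$, obtaining $\rho_\ell(\Gamma)^2 \geq \frac{k}{20}$. The only differences are cosmetic (constants, and the paper's choice of the middle edge $(v,a)$ for an endpoint $v$ of the overall shortest edge); just note that in your first case the assertion that $\frac{3k}{40}$ dominates $\sqrt{\frac{k}{20}}$ holds only for $k\geq 9$, but your sharper bound $\frac{2.7+0.3k}{4}$ (or simply $\rho_\ell(\Gamma)\geq 1$) covers the remaining small values of $k$.
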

\begin{proof}
	We start by proving that $p(\Delta_i)>\gamma \cdot i$, for every $i=1,2,\dots k$. The proof is by induction on $i$. Since the shortest edge has length at least $1$, we have $p(\Delta_1)\geq 3> \gamma$. For $i > 1$, by \Cref{le:geom} we have $p(\Delta_i)>p(\Delta_{i-1})+\gamma$; by induction $p(\Delta_{i-1})>\gamma \cdot (i-1)$ and therefore $p(\Delta_{i})>\gamma \cdot i$. Thus, we have $p(\Delta_k)>\gamma \cdot k$. Let $L$ denote the length of the longest edge $e_1$ of the triangle $\Delta_k$. We have $L \geq \frac{p(\Delta_k)}{3}>\frac{\gamma \cdot k}{3}=\frac{k}{10}$. 
	Notice that, if $e_1$ is not incident to $a$, then one of the two edges of $\Delta_k$ incident to $a$ has length at least $\frac{L}{2}$. Thus, in any case we have that the longest edge $e'_1$ of $\Delta_k$ incident to $a$ (which possibly coincides with $e_1$) has length at least $\frac{k}{20}$. Consider now the shortest edge $e_2$ of $\Gamma$, whose length is $1$. This edge is either incident to $a$, or it is incident to a vertex $v$ that is adjacent to $a$ (every edge is incident to $b_i$ or to $c_i$ for some $i$, with $1 \leq i \leq k$). If $e_2$ is incident to $a$, then $\rho_{\ell}(\Gamma) \geq \frac{|e'_1|_{\Gamma}}{|e_2|_{\Gamma}} \geq  \frac{k}{20}$. If $e_2$ is not incident to $a$, then $e_3=(v,a)$ has the vertex $a$ in common with $e'_1$ and has vertex $v$ in common with $e_2$. By definition of local edge-length ratio we have $|e'_1|_{\Gamma} \leq \rho_{\ell}(\Gamma) |e_3|_{\Gamma} \leq \rho_{\ell}(\Gamma)^2 |e_2|_{\Gamma}=\rho_{\ell}(\Gamma)^2$. Thus, we have $\rho_{\ell}(\Gamma) \geq \sqrt{|e'_1|_{\Gamma}}\geq \sqrt{\frac{k}{20}}$.
\end{proof}

To prove the next theorem we  construct a planar $3$-tree with $n=4k$ vertices that in every embedding contains a copy of $G_k$ embedded as in \Cref{le:fixed-embedding}.

\begin{restatable}[{\hyperref[th:lower*]{$\star$}}]{theorem}{thlower}\label{th:lower}
For every integer $n > 9$, there exists a planar $3$-tree $G$ with $n$ vertices such that $\rho_{\ell}(G) \geq \sqrt{\frac{n}{80}}$.
\end{restatable}
\begin{proof}
	Let $G$ be a planar graph obtained as follows. Start with $K_4$; take two copies $G'$ and $G''$ of $G_k$ and glue them inside $K_4$ so that the external $3$-cycle of each copy coincides with the boundary of an internal face of $K_4$. One immediately verifies that $G$ is a planar $3$-tree. The resulting graph has $4k$ vertices and in any planar straight-line drawing of $G$, no matter what is the planar embedding of $G$, the sub-drawing of either $G'$ or $G''$ is embedded as in the proof of \Cref{le:fixed-embedding}. It follows that $\rho_{\ell}(G) \geq \sqrt{\frac{k}{20}}$. Since $n=4k$ we obtain $\rho_{\ell}(G) \geq \sqrt{\frac{n}{80}}$.
\end{proof}

\section{Upper Bounds}\label{se:Upper}

In this section we prove various upper bounds on the global edge-length ratio using a common approach based on level planar drawings.     

\begin{lemma}\label{le:general-tool}
If $G$ is a $k$-SLP graph for some $k\geq 1$, then $\rho_{g}(G) \leq k$.
\end{lemma}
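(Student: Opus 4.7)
The plan is to take any $k$-SLP drawing $\Gamma_0$ of $G$ (which exists by hypothesis) and transform it into a family of planar straight-line drawings $\{\Gamma_\varepsilon\}_{\varepsilon>0}$ whose global edge-length ratios can be made arbitrarily close to $k$. Since $\rho_g(G)$ is defined as an infimum over all planar straight-line drawings, this immediately yields $\rho_g(G)\leq k$.

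Concretely, I would first normalize the spacing between consecutive levels of $\Gamma_0$ to some positive constant $d$. Since every edge has span at least $1$ (no edge is horizontal, as $\Gamma_0$ is $k$-SLP and not merely $k$-SWLP) and at most $k$, the vertical extent of every edge of $\Gamma_0$ lies in the interval $[d,kd]$. The key operation is then a horizontal contraction: apply the affine map $(x,y)\mapsto(\varepsilon x,y)$, producing a new straight-line drawing $\Gamma_\varepsilon$. This map is an injective affine transformation, so it preserves planarity and keeps every vertex on its original horizontal level, hence $\Gamma_\varepsilon$ is a valid planar straight-line drawing of $G$.

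For an edge with horizontal extent $h$ and vertical extent $v$ in $\Gamma_0$, its length in $\Gamma_\varepsilon$ equals $\sqrt{\varepsilon^2 h^2 + v^2}$, which is sandwiched between $v$ and $v+\varepsilon h$ (using $\sqrt{a+b}\leq\sqrt a+\sqrt b$). Letting $H$ denote the maximum horizontal extent of any edge in the original drawing $\Gamma_0$, every edge of $\Gamma_\varepsilon$ has length at least $d$ and at most $kd+\varepsilon H$, so
$$\rho_g(\Gamma_\varepsilon)\;\leq\;\frac{kd+\varepsilon H}{d}\;=\;k+\frac{\varepsilon H}{d}.$$
Letting $\varepsilon\to 0$ gives $\inf_\varepsilon\rho_g(\Gamma_\varepsilon)\leq k$, which is the desired bound.

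There is no real obstacle here: the lemma is essentially a clean reduction from the combinatorial $k$-SLP structure to a geometric edge-length ratio. The only points worth verifying carefully are (i) that horizontal contraction does not create crossings, which follows because the map is a global homeomorphism of the plane; and (ii) that the shortest-edge bound $d$ is not degraded by the contraction, which holds because contracting the $x$-axis cannot shrink the vertical component $v\geq d$ of an edge.
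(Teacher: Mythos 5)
Your proof is correct and follows essentially the same route as the paper: fix the level spacing, squeeze the drawing horizontally by a factor $\varepsilon$, observe that every edge length lies between the level spacing and $k$ times it plus an $O(\varepsilon)$ term, and let $\varepsilon\to 0$ using the infimum in the definition of $\rho_g(G)$. The only difference is cosmetic: you make the affine contraction and the bound $\sqrt{\varepsilon^2h^2+v^2}\leq v+\varepsilon h$ explicit, where the paper simply states the resulting length bounds.
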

\begin{proof}
    Let $\Gamma$ be a $k$-SLP drawing of $G$ such that the distance between two consecutive levels is one. Assume that $\Gamma$ is a straight-line drawing. If not, it is possible to achieve this condition~\cite{DBLP:journals/algorithmica/EadesFLN06}. 
    Arbitrarily chosen a value of $\varepsilon > 0$, we squeeze horizontally the drawing $\Gamma$ so that its width is $\varepsilon$. After this squeezing, for every edge $e$ we have $1 \leq |e|_{\Gamma} \leq k+\varepsilon$. It follows that $\rho_{g}(\Gamma) \leq k + \varepsilon$ and $\rho_{g}(G) \leq k$. 
\end{proof}

The next lemma follows from the fact that a $k$-SWLP drawing is easily transformed into a $(2k+1)$-SLP drawing. 

\begin{restatable}{lemma}{leGeneralToolWeakly}\label{le:general-tool-weakly}
If $G$ is a $k$-SWLP graph for some $k\geq 1$, then $\rho_{g}(G) \leq 2k + 1$.   
\end{restatable}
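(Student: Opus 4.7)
The plan is to deduce the bound from \cref{le:general-tool} by proving that every $k$-SWLP graph is actually a $(2k+1)$-SLP graph. A $k$-SWLP drawing fails to be $k$-SLP only because of horizontal (span-$0$) edges, so the task reduces to eliminating horizontal edges while keeping the maximum span at most $2k+1$.

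Given a $k$-SWLP drawing $\Gamma$ of $G$ on levels $L_1,\ldots,L_\ell$, I would first observe that the edges of $\Gamma$ whose endpoints lie on the same $L_i$ are collinear, and by planarity each vertex has at most one horizontal incident edge on each side. Hence the horizontal edges on $L_i$ form a vertex-disjoint union of paths and admit a proper $2$-coloring $\chi_i$. I would then split each $L_i$ into two sub-levels labeled $2i-1$ and $2i$ and send each vertex $v \in L_i$ to sub-level $2i-1$ or $2i$ according to $\chi_i(v)$, extending $\chi_i$ arbitrarily to vertices not incident to any horizontal edge. After this move, each former horizontal edge acquires span $1$, while each former edge of span $s\geq 1$ has new span in $\{2s-1,2s,2s+1\}$; thus the maximum new span is at most $2k+1$ and no edge is horizontal.

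The delicate point is realizing this refinement as a planar straight-line drawing on $2\ell$ equispaced levels. The combinatorial perturbation can be carried out geometrically by placing the sub-levels at a vertical distance $\epsilon$ from the original $L_i$ and choosing $\epsilon$ small enough to preserve planarity by a standard continuity argument on pairs of disjoint closed segments; but the resulting sub-levels are not equispaced, and a naive vertical squeeze would bend the edges. I would therefore close the argument by invoking the same straight-line realization result used in the proof of \cref{le:general-tool}~\cite{DBLP:journals/algorithmica/EadesFLN06}: once the level planar embedding and the level assignment are fixed, a straight-line drawing on equispaced levels always exists, and since the span of each edge depends only on the combinatorial level assignment, this drawing is a $(2k+1)$-SLP drawing of $G$. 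Applying \cref{le:general-tool} then yields $\rho_g(G) \leq 2k+1$. The main obstacle I foresee is precisely this last equispaced-levels step; the $2$-coloring argument and the span bookkeeping are routine.
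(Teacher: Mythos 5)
Your proposal is correct and follows essentially the same route as the paper: both split each level into two sub-levels so that span-$0$ edges (which, by planarity, join consecutive vertices on a level) get span $1$ and every span-$s$ edge gets span at most $2s+1\leq 2k+1$, and then conclude via \cref{le:general-tool}. Your $2$-coloring of the horizontal paths and the explicit perturbation/realization step via~\cite{DBLP:journals/algorithmica/EadesFLN06} are just slightly more detailed versions of the paper's alternating assignment, not a different argument.
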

\begin{proof}
	It is easy to see that a $k$-SWLP can be transformed into a $(2k+1)$-SLP. See \Cref{fig:doubling} for an example. Namely, the edges with span zero can only connect consecutive vertices along the level. Thus, we can split each level $i$ into two levels, numbered $2i$ and $2i+1$, and assign the vertices of level $i$ alternating between $2i$ and $2i+1$. The edges with span $0$ in the original drawing will have span $1$ in the new drawing. Analogously, an edge with span $k$ in the weakly drawing will have span $2k+1$ in the proper one. Namely, suppose that for an edge $(u,v)$, $u$ is at level $i$ and $v$ is at level $j$, with $i>j$ and $i-j=k$; in the worst case $u$ is assigned to $2i+1$ and $v$ to $2j$, thus the span of $(u,v)$ in the new drawing is $2i+1-2j=2(i-j)+1=2k+1$. 
\end{proof}

\begin{figure}[htbp]
	\centering
	\subfigure[]{\label{fi:doubling-a}\includegraphics[width=0.32\textwidth,page=1]{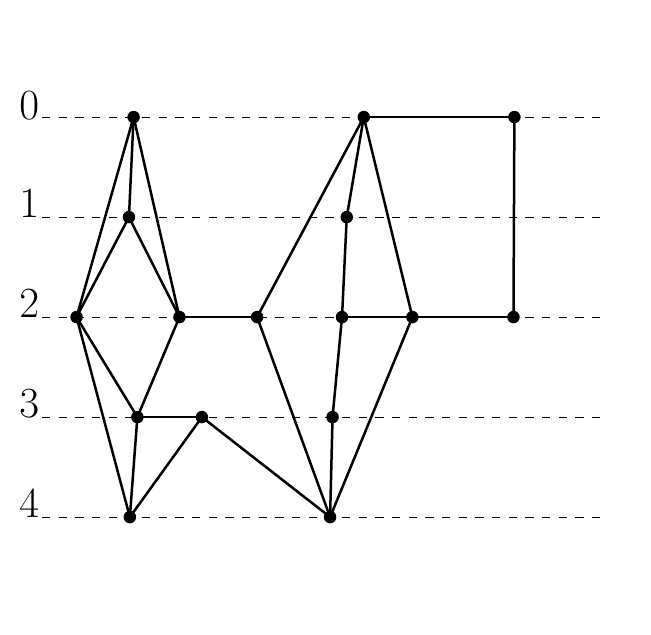}}
	\hfil
	\subfigure[]{\label{fi:doubling-b}\includegraphics[width=0.32\textwidth,page=2]{figures/doubling.pdf}}
	\caption{(a) A $2$-SWLP drawing $\Gamma$; (b) A $5$-SWLP drawing obtained from $\Gamma$ as explained in the proof of \Cref{le:general-tool-weakly}.}
	\label{fig:doubling}
\end{figure}

By \Cref{le:general-tool}, $1$-SLP graphs have global edge-length ratio at most $1$. The class of $1$-SLP graphs include bipartite outerplanar graphs, squaregraphs and dual graphs of monotone curves~\cite{DBLP:journals/algorithmica/BannisterDDEW19}. Also, by \Cref{le:general-tool-weakly} $1$-SWLP graphs, which include the outerplanar graphs~\cite{DBLP:journals/jgaa/FelsnerLW03}, have global edge-length ratio $3$. Notice however that every outerplanar graph $G$ has $\rho_{g}(G) \leq 2$~\cite{DBLP:journals/tcs/LazardLL19}.
Based on this result we look at subfamilies of $2$-outerplanar graphs. A $2$-outerplanar graph is a planar graph that has a planar embedding such that removing the vertices of the outer boundary, we obtain an outerplanar graph. We consider \emph{Halin graphs}, \emph{generalized Halin graphs}, \emph{outerplanar-cycle graphs}, and \emph{outerplanar-caterpillar graphs}, which are all sub-classes of nested pseudotrees~\cite{DBLP:conf/wads/ChaplickLGLM21} and thus have outerplanarity two. A nested pseudotree is an $2$-outerplanar graphs such that removing the vertices of the outer face we obtain a pseudotree, i.e., a connected graph with only one cycle. 
A vertex of a $2$-outerplanar graph $G$ is an \emph{outer vertex} if it belongs to the outer boundary, and an \emph{inner vertex} otherwise. An edge of $G$ is an \emph{outer edge} (\emph{inner edge}) if both its endvertices are outer (inner) vertices, and \emph{mixed edge} otherwise.

\medskip
\noindent \textbf{Halin graphs and generalized Halin graphs.} 
A \emph{Halin graph} is a $3$-connected embedded planar graph $G$ such that, by removing the edges incident to the outer face, one gets a tree whose internal vertices have degree at least $3$ and whose leaves are incident to the outerface of $G$. 
We call $T$ the \emph{characteristic tree} of $G$ and we call $C$ the \emph{adjoint cycle}of $G$. See \Cref{fi:halin-mistake-a,fi:halin-new-b} for two examples of two Halin graphs. In the following, we always assume that a Halin graph is an embedded graph, that is it is equipped with a planar embedding such that $C$ is the boundary of the external face. A Halin graph $G$ is \emph{trivial} if it is a wheel graph, i.e., it has only one vertex that does not belong to the boundary of the external face.  
Bannister et al.~\cite[Thm. 17]{DBLP:journals/algorithmica/BannisterDDEW19} state that all Halin graphs are $1$-SWLP which, together with \cref{le:general-tool-weakly}, would imply an upper bound of $3$ to their global edge-length ratio. Unfortunately, $K_4$ is a Halin graph that is not $1$-SWLP and the proof technique of~\cite{DBLP:journals/algorithmica/BannisterDDEW19} fails even for instances of Halin graphs different from $K_4$. Namely, let $T$ be the tree obtained by removing the edges of the outer face of the Halin graph of \cref{fi:halin-mistake-a}. The leveling  in the proof of Theorem 17 of \cite{DBLP:journals/algorithmica/BannisterDDEW19} is based on the following procedure: Choose a leaf of $T$ as the root and assign it to level $0$; at Step $i$, assign to level $i+1$ the
previously-unassigned nodes that are either children of nodes at level $i$ or that
belong to a path from such a child to its leftmost or rightmost leaf descendant in $T$.  However, for any possible choice of the root of $T$ in the Halin graph of \cref{fi:halin-mistake-a}, one obtains the leveling of \cref{fi:halin-mistake-b} that has an edge between two non-consecutive vertices on a same level.

\begin{figure}[htbp]
	\centering
	\subfigure[]{\label{fi:halin-mistake-a}\includegraphics[width=0.32\textwidth,page=1]{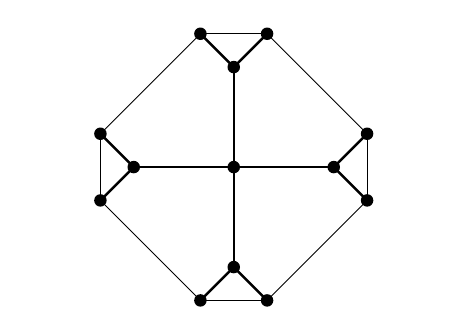}}
	\subfigure[]{\label{fi:halin-mistake-b}\includegraphics[width=0.32\textwidth,page=2]{figures/halin-mistake.pdf}}
	\caption{(a) A Halin graph; (b) a leveling obtained with the technique in~\cite{DBLP:journals/algorithmica/BannisterDDEW19}.}
	\label{fig:halin-mistake}
\end{figure}

We now prove that all Halin graphs except $K_4$ are 1-SWLP. We adopt some terminology from~\cite{DBLP:journals/comgeo/GiacomoLM05}. 

Let $T$ be an ordered rooted tree such that either $T$ consists of a single vertex or every internal vertex has at least two children. The \emph{external path} of $T$ is defined as follows. If $T$ consists of a single vertex $r$, then the external path of $T$ consists of the single vertex $r$; otherwise the external path of $T$ is the path connecting the parent $p_l$ of the leftmost leaf of $T$ and the parent $p_r$ of the rightmost leaf of $T$ (see~\Cref{fi:halin-new-a} for an illustration). Given two vertices $u$ and $v$ of an external path, we say that \emph{$u$ precedes $v$} if $u$ is encountered before $v$ when walking along the external path from $p_l$ to $p_r$.

Let $G$ be a non-trivial Halin graph and let $T$ be the characteristic tree of $G$. A \emph{tuft} of $T$ is a maximal set of at least two leaves of $T$ having the same parent, and such that this parent is adjacent to exactly one other internal vertex of $T$ (in \Cref{fi:halin-new-b} the tufts are highlighted with gray areas). Let $T^*$ be the tree obtained from $T$ by removing all leaves; we call $T^*$ the \emph{pruned tree} of $T$ (see \Cref{fi:halin-new-c} for an illustration). The following lemmas hold. 

\begin{figure}[htbp]
	\centering
	\subfigure[]{\label{fi:halin-new-a}\includegraphics[width=0.48\textwidth,page=1]{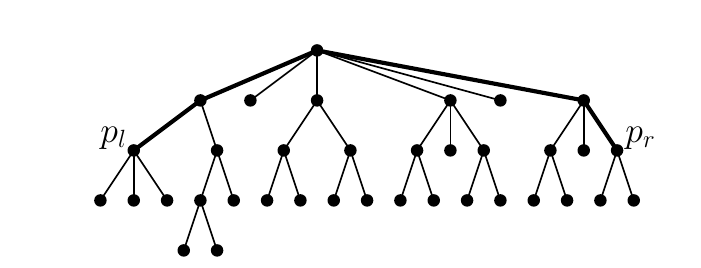}}
	\subfigure[]{\label{fi:halin-new-b}\includegraphics[width=0.48\textwidth,page=2]{figures/halin-new.pdf}}
	\subfigure[]{\label{fi:halin-new-c}\includegraphics[width=0.48\textwidth,page=3]{figures/halin-new.pdf}}
	\caption{(a) The external path of a Tree $T$; (b) A Halin graph $G$; the thick edges form the characteristic tree $T$, while the thin edges from the adjoint cycle $C$; the tufts of $T$ are highlighted with gray areas. (c) The pruned tree $T^*$ of $T$.}
	\label{fi:halin-new.1}
\end{figure}

%

\begin{lemma}\label{le:two-tufts}
	Let $G$ be a Halin graph that is not $K_4$ and let $T$ be the characteristic tree of $G$. The number of tufts of $T$ is equal to the number of leaves of the pruned tree of $T$. 
\end{lemma}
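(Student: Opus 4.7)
The plan is to establish an explicit bijection between the tufts of $T$ and the leaves of $T^*$ by mapping each tuft to the common parent of its leaves. The guiding observation is that a vertex $v$ of $T$ is the parent of a tuft if and only if $v$ is internal in $T$ and adjacent to exactly one other internal vertex of $T$, which is precisely the condition for $v$ to be a leaf of $T^*$.

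First, I would argue that the assignment $F \mapsto v_F$, sending a tuft $F$ to the common parent $v_F$ of its leaves, is well-defined and injective. By definition, all leaves in $F$ share a single neighbor in $T$, and the maximality of $F$ forces $F$ to consist of \emph{all} leaves of $T$ adjacent to $v_F$; hence distinct tufts produce distinct parents.

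Next, I would verify that $v_F$ is always a leaf of $T^*$. Since $v_F$ has at least two leaves of $F$ as neighbors in $T$, it is internal in $T$ and therefore survives in $T^*$. By the definition of a tuft, $v_F$ is adjacent to exactly one internal vertex of $T$; after removing all leaves of $T$ (in particular the leaves of $F$), this unique internal neighbor is the only neighbor of $v_F$ left in $T^*$, so $v_F$ has degree $1$ in $T^*$.

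Finally, I would show surjectivity: every leaf $v$ of $T^*$ equals $v_F$ for some tuft $F$. Since $v$ survives in $T^*$, it is an internal vertex of $T$, and since $T$ is the characteristic tree of a Halin graph, $v$ has degree at least $3$ in $T$. Because $v$ has degree $1$ in $T^*$, exactly one of its $T$-neighbors is internal in $T$, forcing at least two of its $T$-neighbors to be leaves of $T$. The set $F$ consisting of all such leaf neighbors of $v$ then has size at least $2$, shares the common parent $v$, and $v$ has exactly one other internal neighbor, so $F$ is a tuft and $v_F = v$. The main technical point is precisely this count-two guarantee, which relies on the degree-at-least-three condition for internal vertices of the characteristic tree of a Halin graph; the exclusion of $K_4$ merely sidesteps the degenerate case in which $T$ has a single internal vertex and $T^*$ reduces to an isolated vertex, where both sides of the claimed equality would simply vanish.
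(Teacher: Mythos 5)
Your proof is correct and follows essentially the same route as the paper: the paper's (much terser) argument also rests on the observation that a leaf of $T^*$ has degree at least $3$ in $T$ and exactly one internal neighbor, hence at least two leaf neighbors forming a tuft with it as parent. You simply spell out the resulting bijection (well-definedness, injectivity, surjectivity) in full, which the paper leaves implicit.
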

\begin{proof}
	Let $v$ be a leaf of the pruned tree $T^*$; $v$ has degree $1$ in $T^*$ and degree at least $3$ in $T$. This means that $v$ has exactly one neighbor that is not a leaf and at least two adjacent leaves in $T$. 
\end{proof}

Let $G$ be a Halin graph whose characteristic tree is $T$ and whose adjoint cycle is $C$. Since we consider $G$ as an embedded graph with $C$ as the boundary of the external face, we also consider $T$ as an ordered rooted tree. In particular, let $(v_l,v_r)$ be an edge of $C$ such that $v_l$ and $v_r$ are encountered consecutively in this order when walking clockwise along $C$, and let $r$ be any vertex of the unique path connecting $v_l$ to $v_r$ in $T$. If we root $T$ at $r$, then the circular order of the edges around each vertex (derived by the planar embedding of $G$), defines a left-to-right order of the children of every vertex of $T$. According to this order, $v_l$ is the leftmost leaf of $T$ and $v_r$ is the rightmost leaf of $T$. We say that any rooting obtained in this way is a \emph{rooting defined by $(v_l,v_r)$}. Notice that if $G$ is a non-trivial Halin graph, $T$ has at least two internal vertices and therefore there is at least one edge $(v_l,v_r)$ of $C$ such that $v_l$ and $v_r$ are not siblings in $T$. When defining a rooting of the characteristic tree of a non-trivial Halin graph we always choose an edge $(v_l,v_r)$ so that $v_l$ and $v_r$ are not sibling.

We now describe a decomposition of the characteristic tree into a set of paths called \emph{characteristic paths}. The external path of $T$ is a characteristic path. For each vertex $v$ in a characteristic path $\pi$ and for each child $w$ of $v$ that is not in $\pi$, let $T'$ be the tree rooted at $w$. The external path of $T'$ is a characteristic path of $T$. Notice that each vertex $v$ of $T$ belongs to exactly one characteristic path, which we denote by $\pi(v)$. An example of the decomposition in characteristic paths is shown in~\Cref{fi:halin-new-d} for the characteristic tree of the Halin graph of~\Cref{fi:halin-new-b}. To simplify the description we define a new tree $T_{aux}$, called the \emph{auxiliary tree} of $G$, obtained from $T$ by contracting every edge of every characteristic path (see \Cref{fi:halin-new-e}). 
Let $\pi$ be a characteristic path of $T$ and let $\mu$ be the vertex of $T_{aux}$ obtained by contracting all edges of $\pi$. Vertex $\mu$ is called the \emph{image of $\pi$} in $T_{aux}$ and $\pi$ is called the \emph{pertinent path} of $\mu$ in $T$. Let $(u,v)$ be an edge of $T$ such that $\pi(u)$ and $\pi(v)$ are distinct characteristic paths; $T_{aux}$ has an edge $(\mu,\nu)$ such that $\mu$ is the image of $\pi(u)$ in $T_{aux}$ and $\nu$ is the image of $\pi(v)$ in $T_{aux}$; the edge $(\mu,\nu)$ is called the \emph{image of $(u,v)$} in $T_{aux}$ and $(u,v)$ is called the \emph{pertinent edge} of $(\mu,\nu)$ in $T$. The left-to-right order of the children of a vertex of $T_{aux}$ is the one inherited from $T$. More precisely, let $\nu_1$ and $\nu_2$ be two vertices of $T_{aux}$ with the same parent $\mu$; let $(u,v)$ be the pertinent edge of $(\mu,\nu_1)$ in $T$ and let $(w,z)$ be the pertinent edge of $(\mu,\nu_2)$ in $T$. Vertex $\nu_1$ is to the left of $\nu_2$ in $T_{aux}$ if one of the two following cases apply: (i) $u=w$ and $v$ is to the left of $z$ in the left-to-right order of the children of $u=w$ in $T$; or (ii) $u$ precedes $v$ along the characteristic path $\pi(u)=\pi(v)$. Since we only contract edges that are not incident to leaves, the leaves of $T_{aux}$ are in 1-to-1 correspondence with the leaves of $T$. Further, the leftmost child and the rightmost child of every internal vertex of $T_{aux}$ are leaves. 

\begin{figure}[htbp]
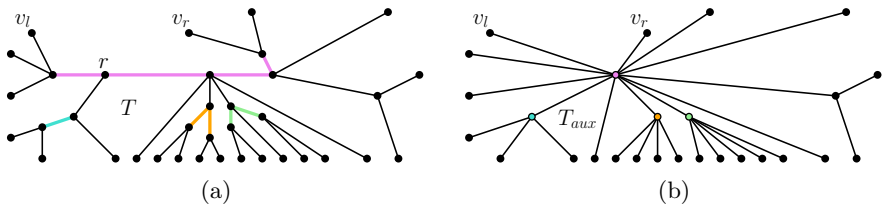

	\centering
	\subfigure[]{\label{fi:halin-new-d}\includegraphics[width=0.48\textwidth,page=4]{figures/halin-new.pdf}}
	\hfil
	\subfigure[]{\label{fi:halin-new-e}\includegraphics[width=0.48\textwidth,page=5]{figures/halin-new.pdf}}
	\caption{(a) A decomposition in characteristic paths of the characteristic tree of the Halin graph of \Cref{fi:halin-new-b}; (b) The corresponding auxiliary tree $T_{aux}$.}
	\label{fi:halin-new.2}
\end{figure}

\begin{lemma}\label{le:characteristic-layout}
	Let $G$ be a non-trivial Halin graph and let $T$ be the characteristic tree of $G$ rooted at any non leaf vertex. Let $v_l=v_1,v_2\dots,v_k=v_r$ be the leaves of $T$ in the order they appear counterclockwise along the adjoint cycle $C$ of $G$ starting from the leftmost leaf $v_l$ of $T$ to the rightmost leaf$v_r$ of $T$. If $v_l$ belongs to a tuft and $v_r$ also belongs to a tuft, then $T$ has a 1-SWLP drawing $\Gamma$ with the following additional properties:
	\begin{itemize}
		\item[(i)]  each edge $(v_i,v_{i+1})$ can be added to $\Gamma$ without introducing any crossing;
		\item[(ii)]  the leaf $v_l$ is the first vertex of the topmost level and the leaf $v_r$ is the last vertex of the same level.
	\end{itemize}  
\end{lemma}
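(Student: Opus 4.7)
The plan is to construct the 1-SWLP drawing $\Gamma$ recursively, using the decomposition of $T$ into characteristic paths.

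First, I place the external path $\pi = u_1, \ldots, u_t$ of $T$ horizontally on level $1$ in left-to-right order. Because $v_l$ and $v_r$ belong to tufts, $u_1 = p_l$ and $u_t = p_r$ each have only one internal neighbor (their parent on $\pi$), so all their children in $T$ are leaves: I place the leaf children of $u_1$ on level $0$ above $u_1$ with $v_l$ leftmost, and symmetrically the leaf children of $u_t$ on level $0$ above $u_t$ with $v_r$ rightmost. Any leaf child of an intermediate vertex $u_i$ on $\pi$ is placed on level $2$ below $u_i$, at a horizontal position consistent with its left-to-right rank among the children of $u_i$.

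Next, for each intermediate $u_i$ and each internal non-path child $w$ of $u_i$, I recursively produce a 1-SWLP drawing $\Gamma_w$ of the subtree $T_w$ and attach it below $u_i$ in $\Gamma$. Since the attaching edge $(u_i,w)$ must span at most one level, the root $w$ must lie on level $2$ of $\Gamma$, and thus---together with the whole characteristic path $\pi(w)$ containing $w$---on the topmost level of $\Gamma_w$; the leaves of the $\pi(w)$-vertices go on the second level of $\Gamma_w$, and deeper sub-subtrees are attached below by the same recursive rule. The fact, already observed for $T_{aux}$, that the leftmost and rightmost children of every internal vertex of the auxiliary tree are leaves guarantees that each recursive step is well-defined and that the outer-face cyclic order of leaves of $\Gamma_w$ matches the one inherited from $G$.

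The main technical obstacle will be verifying that all horizontal placements fit together coherently, so that each attached $\Gamma_w$ occupies exactly the horizontal slot below $u_i$ delimited by the positions of the two siblings flanking $w$, and that distinct attached subtrees do not overlap. Once this is checked, every tree edge has span at most one (edges inside a characteristic path are horizontal with span $0$, while every remaining edge links a path-vertex to an adjacent leaf or to a subtree root on a neighboring level), so $\Gamma$ is indeed 1-SWLP. Property (ii) is then immediate from the placement of $v_l$ and $v_r$, while property (i) follows because consecutive leaves in the (matching) outer-face cyclic order of $\Gamma$ can always be joined by a Jordan arc through the outer face of the tree without introducing any crossing.
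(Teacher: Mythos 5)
Your overall strategy (decompose $T$ into characteristic paths, lay each path out horizontally on one level, and hang the subtrees of its vertices one level lower) is the same as the paper's, which implements it via the auxiliary tree $T_{aux}$ rather than by explicit recursion. However, there is a genuine flaw in your base case: you place the \emph{entire} end tufts (all leaf children of $p_l$ and of $p_r$) on level $0$, above the external path on level $1$, while every other leaf of $T$ ends up on level $2$ or below. Now take the last leaf $v_j$ of the left tuft in the counterclockwise order along $C$; its successor $v_{j+1}$ is a leaf child or a leftmost leaf descendant of the next external-path vertex $u_2$, hence lies on level $2$ or $3$. The cycle edge $(v_j,v_{j+1})$ therefore has span at least $2$, and as a straight segment it must cross the horizontal edge $(u_1,u_2)$ drawn on level $1$. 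So property (i) fails for your drawing in the sense that matters: the lemma is used in \Cref{le:halin} to build a $1$-SWLP drawing of the whole Halin graph, so the cycle edges must be addable as straight-line segments of span at most $1$, not merely as Jordan arcs through the outer face (the Jordan-arc version of (i) is true of \emph{any} planar tree drawing and carries no content here).

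The fix is exactly what the paper does: put the external path on the topmost level and \emph{all} children of its vertices --- including both end tufts --- on the next level, so that every pair of $C$-consecutive leaves ends up either consecutive on a common level or on adjacent levels (the latter using the fact that the leftmost and rightmost child of every internal vertex of $T_{aux}$ is a leaf). Only afterwards are the two single leaves $v_l$ and $v_r$ hoisted to the topmost level, placed strictly to the left, respectively to the right, of everything else; this preserves planarity and span because their siblings ($v_2$, resp.\ $v_{k-1}$) occupy the extreme positions of the second level. Separately, you explicitly leave unverified that the horizontal slots of the recursively attached subtrees can be made disjoint and consistent with the sibling order; that is fillable (it is what the depth-plus-left-to-right leveling of $T_{aux}$ accomplishes in the paper), but as written your argument both omits that verification and, because of the tuft placement, constructs a drawing for which the lemma's conclusion does not hold.
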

\begin{proof}
	We compute first a 1-SWLP drawing $\Gamma_{aux}$ of $T_{aux}$. Refer also to \Cref{fi:halin-new-f} for an illustration. For every vertex $v$ of $T_{aux}$ denote by $d(v)$ the depth of $v$ in $T_{aux}$. Assign each vertex $v$ to level $d(v)$ and order the vertices in each level by the left-to-right order of $T_{aux}$. Namely, let $u$ and $v$ be two vertices assigned to the same level $i$ and let $u'$ and $v'$ be the ancestors of $u$ and $v$, respectively, that have a common parent $w$ (possibly $u'=u$ and/or $v'=v$); $u$ precedes $v$ in level $i$ if and only if $u'$ is to the left of $v'$ in the left-to-right order of the children of $w$. The resulting drawing has span 1 and is planar by construction.
	We now replace each vertex of $T_{aux}$ by its pertinent path, thus obtaining a 1-SWLP drawing $\Gamma_{T}$ of $T$ (see \Cref{fi:halin-new-g,fi:halin-new-h1} for an illustration). More precisely, let $\mu$ be a vertex of $T_{aux}$ and let $\pi=w_1,w_2,\dots,w_h$ be its pertinent path in $T$; the vertices of $\pi$ are assigned to the same level $\ell$ as $\mu$, ordered from left to right in the order $w_1,w_2,\dots,w_h$. Clearly, the edges of $\pi$ can be drawn with span $0$ without creating any crossing. Further, no crossing is created between edges connecting vertices of $\pi$ to their subtrees. Namely, let $z_1$ be a child of $w_i$ and let $z_2$ be a child of $w_j$, for some $1 \leq i \leq j \leq h$. If $i=j$ then the two edges $(w_i,z_1)$ and $(w_j,z_2)$ do not cross. If $i < j$, then $w_i$ precedes $w_j$ along $\pi$ and therefore $w_i$ is to the left of $w_j$ along $\ell$; by the definition of $T_{aux}$, the image $\nu_1$ of $\pi(z_1)$ precedes the image $\nu_2$ of $\pi(z_2)$ in the left-to-right order of the children of $\mu$ in $T_{aux}$; as a consequence, $\nu_1$ is to the left of $\nu_2$ in level $\ell+1$ and therefore $z_1$ is to the left of $z_2$ along level $\ell+1$ once the vertices of $T_{aux}$ are replaced by their pertinent paths. 
	
	\begin{figure}[htbp]
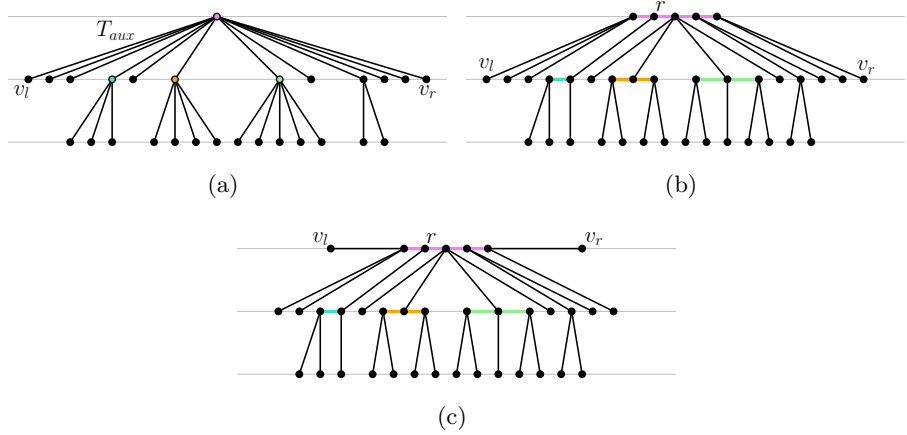

		\centering
		\subfigure[]{\label{fi:halin-new-f}\includegraphics[width=0.48\textwidth,page=6]{figures/halin-new.pdf}}
		\hfil
		\subfigure[]{\label{fi:halin-new-g}\includegraphics[width=0.48\textwidth,page=7]{figures/halin-new.pdf}}
		\hfil
		\subfigure[]{\label{fi:halin-new-h1}\includegraphics[width=0.48\textwidth,page=8]{figures/halin-new.pdf}}
		\caption{(a) A 1-SWLP drawing $\Gamma_{aux}$ of $T_{aux}$; (b) A 1-SWLP drawing of $T$ obtained from $\Gamma_{aux}$ by replacing each vertex with its pertinent path. (c) A 1-SWLP drawing of $T$ with the properties of \Cref{le:characteristic-layout}.}
		\label{fi:halin-new.3}
	\end{figure}
	
	We now prove property $(i)$. Consider two leaves $v_i$ and $v_{i+1}$ that are consecutive along $C$. If they are sibling in $T_{aux}$ then they are consecutive along some level $\ell$ and therefore the edge $(v_i,v_{i+1})$ can be drawn with span $0$ without any crossing. Suppose that they are not sibling; since they are consecutive along $C$, one of the two is a leaf of a subtree whose root is a sibling of the other leaf. Suppose that $v_{i+1}$ is a leaf of a subtree whose root is a sibling of $v_i$ (the other case is analogous). In this case the root $r^*$ of the subtree containing $v_{i+1}$ is the sibling immediately to the right of $v_i$ and $v_{i+1}$ is the leftmost leaf of this subtree. Let $\ell$ be the level of $v_{i}$, we claim that $v_{i+1}$ is on level $\ell+1$. If not, the leftmost child of $r^*$ would not be a leaf, a contradiction because in $T_{aux}$ the leftmost child and the rightmost child of every internal vertex are leaves. It follows that we can add $(v_{i},v_{i+1})$ to $\Gamma$ without introducing crossings and with span $1$.   
	
	We conclude by proving property $(ii)$. Since $v_l$ belongs to a tuft, $v_2$ is a sibling of $v_l$ and therefore they are both drawn on level $1$ (i.e., the second level). Also the parent of $v_l$ and $v_2$ is the first vertex on level $0$. It follows that $v_l$ can be moved to the left of the leftmost vertex of level $0$ without creating any crossing and without increasing the maximum span. By a symmetric argument, $v_r$ can be moved to the right of the righmost vertex of level $0$ (see \Cref{fi:halin-new-h}).   
\end{proof}

Given a Halin graph $G$, if we compute a 1-SWLP drawing $\Gamma_T$ of its characteristic tree $T$ according to~\Cref{le:characteristic-layout}, we can add to $\Gamma_{T}$ all edges of the adjoint cycle except for the edge $(v_l,v_r)$ that connects the rightmost leaf to the leftmost leaf of $T$. In the next lemma we explain how to cope with this issue.

\begin{lemma}\label{le:halin}
	Every Halin graph that is not $K_4$ has a 1-SWLP drawing. 
\end{lemma}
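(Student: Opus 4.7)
The plan is to leverage \Cref{le:characteristic-layout} to draw the characteristic tree $T$, add all outer cycle edges except $(v_l,v_r)$ via property~(i), and then install the closing edge $(v_l,v_r)$ by means of a small local surgery. Since \Cref{le:characteristic-layout} is stated only for non-trivial Halin graphs, I treat the wheels separately at the start.

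For a wheel $W_n$ with $n\geq 5$ (which covers every wheel except $K_4$) I would give a direct ad hoc construction: place the hub on level~$1$ and distribute the $n-1$ cycle vertices among levels $0$, $1$, and $2$ in a snaking pattern that keeps every pair of consecutive cycle vertices on the same or adjacent levels and every spoke with span at most~$1$. In this construction the wrap-around cycle edge is a diagonal of span at most~$1$, and the hub can be placed slightly off the line connecting its two level-$1$ neighbors so that no spoke is collinear with another cycle vertex; this is possible precisely because $n-1\geq 4$.

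For a non-trivial Halin graph $G$, \Cref{le:two-tufts} guarantees that $T$ has at least two tufts; in particular, within any single tuft there are two consecutive leaves of $C$, so I can choose a cycle edge $(v_l,v_r)$ whose endpoints both belong to tufts. Applying \Cref{le:characteristic-layout} yields a 1-SWLP drawing $\Gamma_T$ of $T$ in which $v_l$ is the first and $v_r$ is the last vertex of the topmost level, and to which all cycle edges $(v_i,v_{i+1})$, $1\leq i\leq k-1$, can be added without crossings. After adding those $k-1$ edges, I introduce a new topmost level above the current level~$0$ and move $v_l$ and $v_r$ up to it; to keep the edges $(v_l,v_2)$ and $(v_{k-1},v_r)$ with span at most~$1$, I simultaneously promote $v_2$ and $v_{k-1}$ from level~$1$ to level~$0$, placing them just to the right of the former position of $v_l$ and just to the left of the former position of $v_r$, respectively. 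The edge $(v_l,v_r)$ then becomes a horizontal span-$0$ segment on the new top level, unobstructed since only $v_l$ and $v_r$ lie there; a direct check confirms that $(v_l,p)$, $(v_l,v_2)$, $(v_r,p')$, and $(v_r,v_{k-1})$, where $p$ and $p'$ are the parents of the tufts containing $v_l$ and $v_r$, all have span at most~$1$, while no other edge changes span.

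The main obstacle will be to verify planarity after this local surgery. Specifically, I need to argue that the promotion of $v_2$ to level~$0$, between the former $x$-coordinate of $v_l$ and that of~$p$, is consistent with the cyclic order of edges around $p$ inherited from the planar embedding, that the new span-$0$ edge $(p,v_2)$ does not cross any previously drawn edge, and that the re-drawn edges $(v_l,p)$ and $(v_l,v_2)$ are crossing-free and mutually compatible. A short case analysis distinguishing whether $v_l$ and $v_r$ lie in the same tuft and whether that tuft has exactly two or more leaves completes the argument; in the extreme case in which the tuft containing $v_l$ and $v_r$ has only two leaves, $v_2$ coincides with $v_r$ and the surgery simplifies considerably.
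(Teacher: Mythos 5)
There is a genuine gap, and it sits exactly where the real difficulty of this lemma lies: installing the closing edge $(v_l,v_r)$. First, your choice of $(v_l,v_r)$ is problematic. Taking two consecutive leaves \emph{inside one tuft} makes $v_l$ and $v_r$ siblings, which the paper's rooting convention explicitly excludes; in that case the external path degenerates to the single vertex $p_l=p_r$, and, more importantly, $v_2$ is then \emph{not} a sibling of $v_l$: it is the leftmost leaf of the subtree rooted at the unique internal child of the root and can lie on level $2$ or deeper. Your surgery assumes $v_2$ and $v_{k-1}$ sit on level $1$, so "promoting" them to level $0$ is not available here (and your fallback claim that $v_2=v_r$ when the tuft has two leaves is false, since $T$ has at least four leaves when it has two tufts). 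The alternative --- choosing $(v_l,v_r)$ joining leaves of two \emph{different} tufts --- is not always possible: consider a characteristic tree with a center $c$ whose children alternate between internal vertices carrying a two-leaf tuft and single leaves; then every edge of $C$ with both endpoints in tufts joins siblings. This is precisely why the paper's proof first strips the single leaves (Case~1) and later reinserts them on a new top level, where they conveniently carry the closing connection from $v_l$ to $v_r$.

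Second, even when $v_l$ and $v_r$ do lie in distinct tufts and $v_2,v_{k-1}$ are on level $1$, the surgery still fails in general: moving $v_2$ from level $1$ to level $0$ changes the span of the \emph{next} cycle edge $(v_2,v_3)$, and $v_3$ may legitimately lie on level $2$ (e.g., when the tuft of $v_l$ has exactly two leaves and $v_3$ is the leftmost leaf of a subtree hanging one level below the external path). That edge then acquires span $2$, so your assertion that "no other edge changes span" does not hold, and fixing it would trigger a cascade of promotions along $C$. This cascading obstruction is the reason the paper does not attempt a local fix at all in the absence of single leaves: in its Case~2 it splits $T$ at an edge of the pruned tree whose endpoints are both internal, draws the two halves by \Cref{le:characteristic-layout}, and stacks one mirrored copy above the other so that the two inter-tuft cycle edges and the splitting edge all run between the two adjacent middle levels. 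Your treatment of wheels is fine in spirit (and matches the paper), but the non-trivial case needs either the paper's case distinction or a substantially more careful replacement for the surgery.
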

\begin{proof}
	Let $G$ be a Halin graph that is not $K_4$, let $T$ be its characteristic tree and let $C$ be its adjoint cycle. Consider first the case when $G$ is trivial, i.e., $G$ is a wheel. Since $G$ is not $K_4$, then $T$ has at least four leaves and a 1-SWLP drawing of $G$ can be computed as shown in \Cref{xxx}. Assume therefore that $G$ is non-trivial. It follows that $T$ has at least one edge (and therefore at least four leaves). The pruned tree $T^*$ also has at least one edge and thus, by~\Cref{le:two-tufts}, $T$ has at least two tufts. We distinguish two cases depending on the existence of leaves of $T$ that do not belong to any tuft. Such leaves are called \emph{single leaves}. 
	
	\begin{description}
		\item[Case 1: $T$ has at least one single leaf.] Let $V_1$ be a maximal set of single leaves that are consecutive along $C$ and let $V_2$ be the set of vertices that are adjacent to the vertices of $V_1$ in $T$. Let $\tau_1$ and $\tau_2$ be the two tufts such that walking clockwise along $C$ $\tau_1$ is encountered before $V_1$ and $\tau_2$ is encountered after $V_1$. In this case we remove from $G$ all edges connecting the single leaves of $V_1$ to their parents; we then smooth the vertices in $V_1$, thus making a leaf $v_l$ of $\tau_1$ and a leaf $v_r$ of $\tau_2$ adjacent in $C$. Finally, we smooth the vertices in $V_2$ that have degree $2$ after the removal; notice that it may happen that no vertex of $V_2$ is removed. Let $G'$ be the resulting Halin graph and let $T'$ be the characteristic tree with a rooting defined by $(v_l,v_r)$. By~\Cref{le:characteristic-layout}, $T'$ has a 1-SWLP drawing such that all edges between consecutive leaves, except the edge $(v_l,v_r)$, can be added to the drawing without violating the planarity and without changing the maximum span. Let $\Gamma'$ be the resulting 1-SWLP drawing. Recall that $v_l$ and $v_r$ are the first and the last vertex of the topmost level, i.e., level $0$, in $\Gamma'$. We now explain how to use $\Gamma'$ to construct a 1-SWLP drawing of $G$. First we subdivide the edges that resulted from the smoothing of the vertices of $V_2$; since all these edges are drawn with span 0 in $\Gamma'$ the subdivision can be done with no crossings and no span increase. We now put the single leaves of $V_1$ on a new level $-1$ above level $0$ in the order they appear along $C$ going clockwise from $v_l$ to $v_r$. The edges connecting the leaves of $V_1$ to their parents can be added without crossings since the parents are ordered along level 0 consistently with the order of the leaves along level -1. Further, the edges connecting leaves of $V_1$ that are consecutive clockwise along $C$ can be added with span 0. Finally the edge connecting $v_r$ to the first leaf of $V_1$ and the edge connecting the last leaf of $V_1$ to $v_r$ can be added with span 1. See \Cref{fi:halin-new-i}.
		
		\begin{figure}[htbp]
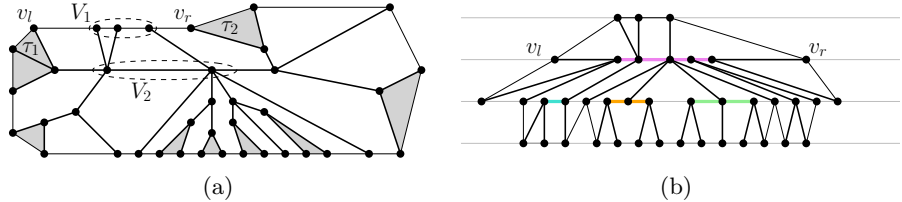

			\centering
			\subfigure[]{\label{fi:halin-new-h}\includegraphics[width=0.48\textwidth,page=9]{figures/halin-new.pdf}}
			\hfil
			\subfigure[]{\label{fi:halin-new-i}\includegraphics[width=0.48\textwidth,page=10]{figures/halin-new.pdf}}
			\caption{Illustration of \Cref{le:halin}, Case 1: (a) A Halin graph $G$; (b) A 1-SWLP drawing of $G$.}
			\label{fi:halin-new.4}
		\end{figure}
		
		\item[Case 2: $T$ has no single leaves.]  We distinguish three sub-cases depending on the number of internal vertices of the pruned tree $T^*$ of $T$.
		\begin{description}
			\item[Case 2.a: $T^*$ has 0 internal vertices.] In this case $T^*$ is a single edge and $T$ is a caterpillar whose spine has only one edge. A 1-SWLP drawing of $G$ can be constructed as shown in \Cref{fi:halin-new-j}.
			\item[Case 2.b: $T^*$ has 1 internal vertex.] In this case $T^*$ is a star with at least three edges (otherwise $T$ would have a degree-2 vertex and $G$ would not be a Halin graph). A 1-SWLP drawing of $G$ can be constructed as shown in \Cref{fi:halin-new-k}. In the figure the internal vertex of $T^*$ has degree $4$. It is immediate to see that the construction can be extended to any degree of the internal vertex larger or equal to $3$.
			
			\item[Case 2.c: $T^*$ has at least two internal vertices.] In this case there exists one edge $e^*$ whose end-vertices are both non-leaves. This edge also exists in $T$ and if we remove it we obtain two subtrees $T_a$ and $T_b$ of $T$ (see~\Cref{fi:halin-new-l}). Let $r_a$ be the end-vertex of $e^*$ that belongs to $T_a$ and let $r_b$ be the vertex of $e^*$ that belongs to $T_b$. Rooting $T_a$ at $r_a$ and $T_b$ at $r_b$ we obtain that the leftmost leaf $v_{l,a}$ of $T_a$ is adjacent to the rightmost leaf $v_{r,b}$ of $T_b$ and the rightmost leaf $v_{r,a}$ of $T_a$ is adjacent to the leftmost leaf $v_{l,b}$ of $T_b$. Notice that $r_a$ may have degree two in $T_a$; if so we smooth it. Analogously, we smooth $r_b$ if it has degree two in $T_b$. We claim that both $T_a$ and $T_b$ have at least two tufts. Since vertex $r_a$ is not a leaf of $T^*$, it has degree at least 3 in $T^*$ and therefore it has at least two adjacent vertices in $T$. Since $r_a$ is not a leaf of $T^*$, these two adjacent vertices are not leaves of $T$ and thus they are the roots of two sub-trees, each having at least one tuft. By the same argument, $T_b$ has at least two tufts. Note that $T_a$ and $T_b$ are the characteristic trees of two Halin graphs, each obtained by connecting their leaves with a cycle. We compute a 1-SWLP drawing $\Gamma_{a}$ of $T_a$ and a 1-SWLP drawing $\Gamma_{b}$ of $T_b$ according to~\Cref{le:characteristic-layout} (see~\Cref{fi:halin-new-m} and~\Cref{fi:halin-new-n}). If $r_a$ and/or $r_b$ were smoothed we subdivide the edges obtained by the smoothing to reinsert them. We mirror the drawing $\Gamma_b$ horizontally and vertically and place it above the drawing $\Gamma_a$ (see~\Cref{fi:halin-new-o}). In this way, the first level of $\Gamma_a$ and the first level of $\Gamma_b$ (which after the vertical mirror is the last level) are consecutive. Moreover, $v_{l,a}$, $r_a$, and $v_{r,a}$ appear in this order on their level, with $v_{l,a}$ as the first vertex and $v_{r,a}$ as the last vertex; analogously $v_{r,b}$, $r_b$, and $v_{l,b}$ appear in this order on their level, with $v_{r,b}$ as the first vertex and $v_{l,b}$ as the last vertex (recall that $\Gamma_b$ has been mirrored also horizontally). This means that we can add the edges $(v_{l,a},v_{r,b})$, $(v_{r,a},v_{l,b})$, and $(r_a,r_b)$ without introducing any crossing and without increasing the maximum span.    
		\end{description}      
	\end{description}
\end{proof}

\begin{figure}[htbp]
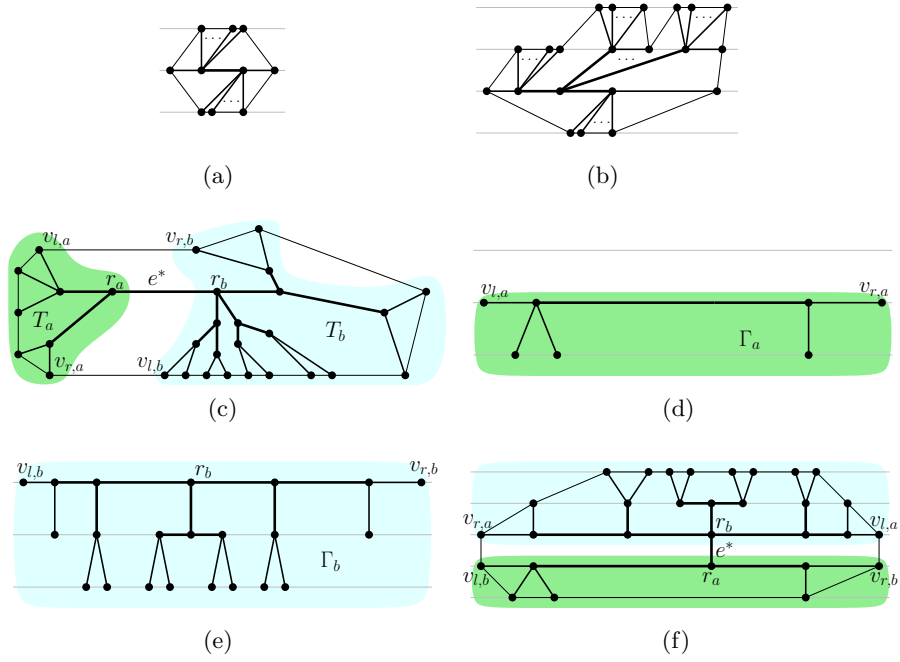

	\centering
	\subfigure[]{\label{fi:halin-new-j}\includegraphics[width=0.16\textwidth,page=11]{figures/halin-new.pdf}}
	\hfil
	\subfigure[]{\label{fi:halin-new-k}\includegraphics[width=0.32\textwidth,page=12]{figures/halin-new.pdf}}\\
	\hfil
	\subfigure[]{\label{fi:halin-new-l}\includegraphics[width=0.48\textwidth,page=13]{figures/halin-new.pdf}}
	\hfil
	\subfigure[]{\label{fi:halin-new-m}\includegraphics[width=0.48\textwidth,page=14]{figures/halin-new.pdf}}
	\hfil
	\subfigure[]{\label{fi:halin-new-n}\includegraphics[width=0.48\textwidth,page=15]{figures/halin-new.pdf}}
	\hfil
	\subfigure[]{\label{fi:halin-new-o}\includegraphics[width=0.48\textwidth,page=16]{figures/halin-new.pdf}}
	\caption{Illustration of \Cref{le:halin}, (a) Case 2a; (b) Case 2b; (c)-(f) Case 2c.}
	\label{fi:halin-new.5}
\end{figure}

\Cref{le:halin} and \Cref{le:general-tool-weakly} imply the following theorem.

\begin{restatable}{theorem}{thHalin}\label{th:halin}
If $G$ is a Halin graph, then $\rho_{g}(G) \leq 3$.  
\end{restatable}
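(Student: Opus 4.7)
The plan is to combine the two preceding results in a direct way and handle the exceptional case $K_4$ separately. The statement is phrased as a ``restatable theorem'' that bundles Halin graphs together, but the machinery built in the section only yields a 1-SWLP drawing for Halin graphs different from $K_4$ (cf. \Cref{le:halin}), so the proof splits into two cases.

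First, I would assume $G$ is a Halin graph with $G \neq K_4$. By \Cref{le:halin}, $G$ admits a 1-SWLP drawing. Applying \Cref{le:general-tool-weakly} with $k=1$, we immediately obtain $\rho_g(G) \leq 2 \cdot 1 + 1 = 3$. This is the bulk of the argument and requires no further work beyond invoking the two lemmas.

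Second, I would handle $G = K_4$ directly by exhibiting an explicit planar straight-line drawing. The natural choice is to place three vertices at the corners of an equilateral triangle of side length $1$, and the remaining vertex at the centroid. The three outer edges have length $1$ and the three inner edges have length $1/\sqrt{3}$, so the global edge-length ratio of this drawing equals $\sqrt{3} < 3$. Thus $\rho_g(K_4) \leq \sqrt{3} \leq 3$, which matches the stated bound.

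I do not anticipate any real obstacle here: the hard technical work has already been done in \Cref{le:characteristic-layout} and \Cref{le:halin}, and \Cref{le:general-tool-weakly} converts span bounds to length-ratio bounds. The only subtlety worth flagging is that the 1-SWLP machinery fails for $K_4$ (as noted in the discussion around \Cref{fig:halin-mistake}), which is precisely why the $K_4$ case must be treated by a dedicated drawing rather than by appealing to \Cref{le:halin}.
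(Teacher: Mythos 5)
Your proof is correct and follows essentially the same route as the paper: it combines \Cref{le:halin} with \Cref{le:general-tool-weakly} to get the bound $2\cdot 1+1=3$ for every Halin graph other than $K_4$. In fact you go slightly further than the paper's one-line derivation by explicitly settling the $K_4$ case (which \Cref{le:halin} excludes) via the equilateral-triangle-plus-centroid drawing of ratio $\sqrt{3}\le 3$, a small but genuine gap the paper leaves implicit.
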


A \emph{generalized Halin graph} is a planar graph obtained by subdividing the edges of a Halin graph that do not belong to the outer boundary~\cite{DBLP:conf/iisa/BekosKR16}. 

\begin{restatable}{theorem}{thGeneralizedHalin}\label{th:Generalized-Halin}
If $G$ is a generalized Halin graph, then $\rho_{g}(G) \leq 5$.
\end{restatable}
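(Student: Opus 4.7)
The plan is to show that every generalized Halin graph $G$ admits a $2$-SWLP drawing, and then to invoke \Cref{le:general-tool-weakly} with $k=2$ to deduce $\rho_g(G)\leq 2\cdot 2+1 = 5$.

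First, I would let $H$ be the Halin graph obtained from $G$ by smoothing all its degree-$2$ vertices (i.e., the subdivision vertices on the inner edges); by the definition of generalized Halin graph, $H$ is indeed a Halin graph. The case $H=K_4$ is handled separately by a direct ad hoc construction, since \Cref{le:halin} does not cover this case. Otherwise, I would apply \Cref{le:halin} to obtain a $1$-SWLP drawing $\Gamma_H$ of $H$, and then double the number of levels of $\Gamma_H$: each old level $i$ becomes a new level $2i$, while the new intermediate levels $2i+1$ are initially empty. After this refinement, every edge of $H$ that had span $0$ in $\Gamma_H$ still has span $0$, while every edge of span $1$ in $\Gamma_H$ now has span $2$; in particular, the refined drawing of $H$ is $2$-SWLP.

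Next, I would reinsert the subdivision vertices on the newly created odd levels. For each inner edge $(u,v)$ of $H$ that is subdivided in $G$ into a path $u,x_1,\ldots,x_k,v$, I would proceed as follows. If $(u,v)$ had span $0$ in $\Gamma_H$ (so $u$ and $v$ lie on the same level), I would place $x_1,\ldots,x_k$ on that level between $u$ and $v$ in x-order, which gives every sub-segment span $0$. If $(u,v)$ had span $1$ in $\Gamma_H$ (so, after doubling, $u$ and $v$ lie on levels $2i$ and $2i+2$), I would place $x_1,\ldots,x_k$ on the intermediate level $2i+1$ at strictly monotonic x-coordinates chosen inside the open face of $\Gamma_H$ immediately adjacent to $(u,v)$ on one of its two sides. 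In this way, the two extreme sub-segments $(u,x_1)$ and $(x_k,v)$ have span $1$ and the interior sub-segments $(x_j,x_{j+1})$ have span $0$.

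In the resulting drawing $\Gamma_G$, every edge of $G$ has span at most $2$, so $\Gamma_G$ is a $2$-SWLP drawing of $G$, and \Cref{le:general-tool-weakly} yields $\rho_g(G)\leq 5$. The main obstacle I foresee is verifying planarity after reinserting the subdivisions, especially when multiple edges spanning the same pair of original levels are simultaneously subdivided: the crucial point is that by planarity of $\Gamma_H$, edges that span the same pair of levels are left-to-right ordered consistently at their two endpoints, hence their traces on the intermediate level $2i+1$ are also ordered left-to-right, and confining the subdivision path of each $(u,v)$ to a narrow region of its adjacent face avoids both crossings with other edges and self-crossings along the resulting zigzag.
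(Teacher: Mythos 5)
Your proof follows essentially the same route as the paper's: smooth the degree-$2$ vertices to recover a Halin graph, take its $1$-SWLP drawing from \Cref{le:halin}, double the levels, and reinsert each subdivision chain on the appropriate (same or intermediate) level to obtain a $2$-SWLP drawing, then apply \Cref{le:general-tool-weakly}. If anything, you are slightly more careful than the paper, which neither singles out the case where the smoothed graph is $K_4$ (a case excluded from \Cref{le:halin}) nor spells out why the reinserted chains preserve planarity.
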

\begin{proof}
	We prove that $G$ has a $2$-SWLP drawing $\Gamma$, which, by \Cref{le:general-tool-weakly}, implies the statement. Smoothing all the degree-2 vertices of $G$, we obtain a Halin graph $G'$. By \Cref{th:halin} $G'$ has a $2$-SWLP drawing $\Gamma'$. We now double the number of levels by placing a new level between any pair of consecutive original levels. Let $e$ be an edge of $G'$ that is obtained as the result of smoothing one or more degree-2 vertices. If $e$ has span zero in $\Gamma'$ we just subdivide it and all the degree-2 vertices are on the same level as the two endvertices of $e$. If $e$ has span one in $\Gamma'$, then it crosses at least one new level that has been added between the two levels that host its end-vertices. The intersection point of $e$ with this new level can be replaced by a chain of degree-2 vertices. Clearly, the edge that have span one in $\Gamma'$ and are not subdivided have span two in the final drawing. 
\end{proof}

\noindent \textbf{Outerplanar-cycle graphs.} An \emph{outerplanar-cycle graph} is a $2$-outerplanar graph such that the inner vertices induce a simple cycle $C$. Without loss of generality, we assume that all internal faces are triangles except the one bounded~by~$C$. If not, we can add edges to achieve this condition. We start with cycle-cycle~graphs, i.e. outerplanar-cycle graphs such that the outer vertices induce a simple cycle.

\begin{restatable}{lemma}{leCycleCyle}\label{le:cycle-cycle}
Every cycle-cycle graph has a $3$-SWLP drawing.
\end{restatable}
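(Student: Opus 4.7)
The plan is to build a $3$-SWLP drawing on four equispaced horizontal levels, placing the outer vertices on the two extreme levels $0$ (top) and $3$ (bottom) and the inner vertices on the two middle levels $1$ and $2$. The first step is to split both cycles consistently: I choose two edges $(v_p, v_{p+1})$ and $(v_q, v_{q+1})$ of $C_{in}$ that serve as the cuts of the inner cycle and partition $C_{in}$ into an upper path $P_{in}^{top}$ and a lower path $P_{in}^{bot}$. Since the region between $C_{in}$ and $C_{out}$ is triangulated, each cut edge lies on a unique triangular face whose third vertex is an outer vertex; denote these outer vertices $u^L$ and $u^R$. The two vertices $u^L$ and $u^R$ are the common endpoints of the two subpaths of $C_{out}$, which I call $P_{out}^{top}$ and $P_{out}^{bot}$; by the slide-monotone structure of the triangulation, every internal outer vertex of $P_{out}^{top}$ (resp.\ $P_{out}^{bot}$) has its inner-neighbor fan contained in $P_{in}^{top}$ (resp.\ $P_{in}^{bot}$).

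Next I assign vertices to levels as follows. Place $P_{out}^{top}$ left-to-right on level $0$ in $C_{out}$ order, with $u^L$ leftmost and $u^R$ rightmost; place $P_{in}^{top}$ left-to-right on level $1$ in $C_{in}$ order; place $P_{in}^{bot}$ left-to-right on level $2$ in reversed $C_{in}$ order (so that $v_p$ is leftmost on level $2$ and $v_{q+1}$ is rightmost); place $P_{out}^{bot}$ left-to-right on level $3$ in the order consistent with $C_{out}$, with the bottom-outer neighbor of $u^L$ leftmost and that of $u^R$ rightmost. I then choose the $x$-coordinates so that $u^L$ together with the leftmost vertex of level $3$, respectively $u^R$ together with the rightmost vertex of level $3$, have $x$-coordinates strictly smaller, respectively larger, than all $x$-coordinates used on levels $1$ and $2$.

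With this placement the span analysis is immediate: paths laid on a single level contribute only span-$0$ edges; the two cut edges of $C_{in}$ connect consecutive levels and have span $1$; mixed edges with both endpoints in the same half have span $1$; the mixed edges joining $u^L$ or $u^R$ to inner vertices in the opposite half go from level $0$ to level $2$ and have span $2$; and the two wrap-around edges of $C_{out}$ joining level $0$ to level $3$ are the only span-$3$ edges. For planarity within each half, I would appeal to the standard fact that the top (resp.\ bottom) half is a maximal outerplanar graph bounded by an outer path, an inner path, and two bridge edges, so the prescribed ordering yields a planar two-level drawing via its sliding-fan structure.

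The main obstacle is guaranteeing that the span-$2$ and span-$3$ edges incident to $u^L$, $u^R$, and the extreme vertices of level $3$ do not cross the horizontal edges drawn on levels $1$ and $2$ or the mixed edges of the opposite half. The key observation is that, since these few edges emanate from vertices pushed past the $x$-range occupied by the inner vertices, the straight segments realizing them traverse levels $1$ and $2$ outside that range, where no other edge lives. A small amount of case analysis will be needed to handle degenerate inputs in which $|C_{in}|$ or $|C_{out}|$ is so small that two distinct cuts $(v_p,v_{p+1})$ and $(v_q,v_{q+1})$ yielding distinct $u^L \neq u^R$ cannot be chosen.
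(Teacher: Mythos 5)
Your construction is sound and rests on the same skeleton as the paper's proof: four levels, the outer cycle split across the two extreme levels, the inner cycle split across the two middle levels, and the identical span accounting (two span-$3$ wrap-around outer edges, two span-$2$ mixed edges reaching the opposite inner arc, everything else span $\le 1$). The difference is the cutting rule. You cut $C_{in}$ at two edges whose triangle apexes $u^L\neq u^R$ are distinct outer vertices and then need two auxiliary facts: that such a pair exists, and that the fans of all outer vertices strictly between $u^L$ and $u^R$ stay inside the corresponding inner arc. Both are true (a counting argument on the $h+k$ mixed edges of the triangulated annulus shows at least two outer vertices have fans of size $\ge 2$, and the fans tile $C_{in}$ in circular order consistently with $C_{out}$), so the ``degenerate case'' you flag at the end never actually arises for $h,k\ge 3$; still, you leave both facts as assertions. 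The paper instead picks a \emph{single} outer vertex $u_1$ with at least two inner neighbors, places it alone on the top level, and cuts $C_{in}$ at the two edges bounding the fan of $u_1$; the top inner path is then exactly that fan, the top outer ``path'' is a single vertex, and both of your auxiliary claims become vacuous or immediate. So the paper's version is a strictly simpler instance of your scheme: what you treat as the problematic degenerate configuration ($u^L=u^R$, one apex) is precisely the configuration the paper deliberately engineers. If you keep your two-apex version, you should add the counting argument for the existence of two apexes and a sentence justifying the fan-containment property; alternatively, collapsing the top outer path to one vertex removes the need for both.
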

\begin{proof}
	Let $G$ be a cycle-cycle graph and let $u_1,u_2,\dots,u_h$ ($h \geq 3$) be the vertices of the outer cycle in clockwise order and denote by $v_1,v_2,\dots,v_k$ (with $k \geq 3$) the vertices of the inner cycle in clockwise order. Note that there must be a vertex adjacent to two inner vertices.  Assume w.l.o.g. that $u_1$ is such a vertex and that it is adjacent to vertices $v_1, v_2, \dots, v_{k'}$. See \Cref{fi:outer-cycle} for an illustration. 
	
	 \begin{figure}
	 	\centering
		\subfigure[]{\label{outer-cycle-a}\includegraphics[width=0.4\textwidth,page=1]{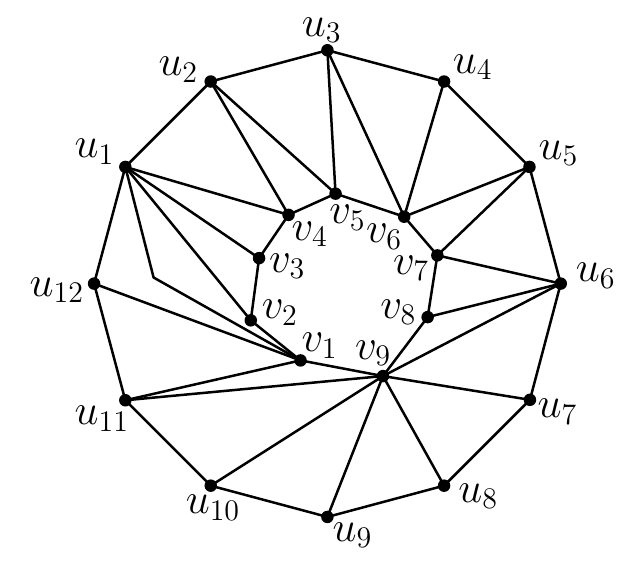}}
		\hfil
		\subfigure[]{\label{outer-cycle-b}\includegraphics[width=0.4\textwidth,page=2]{figures/outer-cycle.pdf}}
		\caption{\label{fi:outer-cycle} (a) A cycle-cycle $G$; (b) A weakly $3$-span level planar drawing of $G$.}
	\end{figure}
	
	We assign $u_1$ to level $1$, and $v_1,v_2,\dots,v_{k'}$ to level $2$ in this left-to-right order. We assign $v_{k'+1},v_{k'+2},\dots,v_k$ to level $3$ in this right-to-left order. We finally assign $u_2, u_3,\dots,u_h$ to level 4 in this right-to-left order. The edges of the inner cycle connect consecutive vertices on levels $2$ and $3$ except the edges $(v_k,v_1)$ and  $(v_{k'},v_{k'+1})$ that connect the first vertex of level $2$ to the first vertex of level $3$ and the last vertex of level $2$ to the last vertex of level $3$, respectively. The edges of the outer cycle connect consecutive vertices on level $4$ except the edges $(u_1,u_2)$ and  $(u_{h},u_{1})$, which connect the only vertex of level $1$ to the first and the last vertex of level $4$. Thus, the inner and the outer cycles are drawn planarly. The edges that connect outer vertices to inner vertices are also drawn planarly. Those connecting $u_1$ to the vertices of level $2$ are obviously planar and so are those connecting vertices on level $4$ to vertices of level $3$ (because the orderings of the vertices in the two levels are coherent). Finally, the only two edges connecting level $2$ to level $4$ are $(v_1,u_h)$ and $(v_{k'},u_2)$, which connect the first vertex of level $2$ to the first vertex level of level $4$ and the last vertex of level $2$ to the last vertex level of level $4$, respectively. the edges with span larger than $1$ are $(v_1,u_h)$ and $(v_{k'},u_2)$, which have span $2$, and $(u_1,u_h)$ and $(u_1,u_2)$, which have span $3$.   
\end{proof}


We now consider the class of general outerplanar-cycle. We use as a subroutine the technique to compute a weakly $1$-span level planar drawing of outerplanar graphs. Such a technique is based on a BFS leveling according to which every vertex is assigned to a level equal to its distance from the vertex from which the BFS traversal starts. Also, the computed drawing preserves the outerplanar embedding of the input graph. Specifically, we can compute such a weakly level planar drawing guaranteeing that the first level of the drawing contains only a chosen edge $e=(u,v)$ (and no other vertex). To this aim, it is sufficient to add a dummy vertex $w$ connecting it to $u$ and $v$ and perform the BFS traversal starting from $w$. Alternatively, we can guarantee that for a chosen  edge $e=(u,v)$, the end-vertex $u$ is the leftmost vertex of level 1 and the end-vertex $v$ is the leftmost vertex of level 2. To this aim, it is sufficient to to start the BFS traversal from $u$ and possibly flipping the drawing in case $u$ and $v$ are the rightmost vertices of levels 1 and 2, respectively. 

    \begin{figure}[htbp]
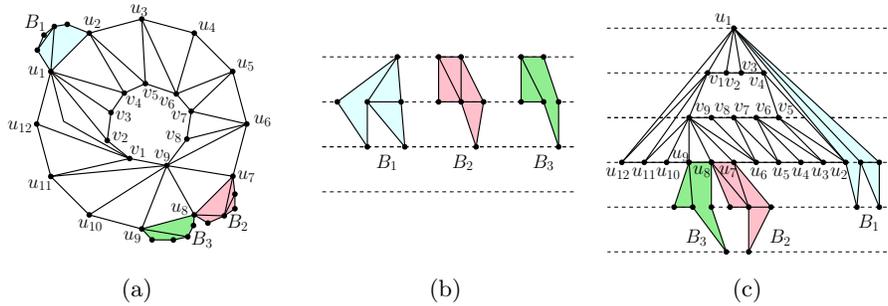

        \centering
        \subfigure[]{\label{fi:outer-cycle-c}\includegraphics[width=0.32\textwidth,page=3]{figures/outer-cycle.pdf}}
        \hfil
        \subfigure[]{\label{fi:outer-cycle-d}\includegraphics[width=0.32\textwidth,page=4]{figures/outer-cycle.pdf}}
        \subfigure[]{\label{fi:outer-cycle-e}\includegraphics[width=0.32\textwidth,page=5]{figures/outer-cycle.pdf}}
        \caption{\label{fi:outer-cycle-2} (a) An outerplanar-cycle $G$ with three components $B_0$, $B_1$, and $B_2$.  (b) $1$-SWLP drawings of  $B_1$, $B_2$, and $B_3$, (c) A $3$-SWLP drawing of $G$ obtained by combining the drawings of (a) and (b).}
    \end{figure}

\begin{restatable}{theorem}{thOuterplanarCycle}\label{th:outerplanar-cycle}
If $G$ is an outerplanar-cycle graph, then $\rho_{g}(G) \leq 7$. 
\end{restatable}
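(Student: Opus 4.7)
The plan is to show that every outerplanar-cycle graph $G$ admits a $3$-SWLP drawing; the bound $\rho_g(G) \leq 7$ then follows immediately from \Cref{le:general-tool-weakly}. The construction extends the four-level layout used in \Cref{le:cycle-cycle} for cycle-cycle graphs, by replacing the outer cycle with a collection of outerplanar ``blocks'' drawn via the controlled weakly $1$-span level planar routine for outerplanar graphs recalled just before the theorem.

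First I would place the inner cycle $C$ exactly as in the cycle-cycle construction. Pick an outer vertex $u_1$ adjacent to a maximal contiguous sequence of inner vertices $v_1,\ldots,v_{k'}$, assign $u_1$ to level $1$, the vertices $v_1,\ldots,v_{k'}$ to level $2$ from left to right, and the remaining inner vertices $v_{k'+1},\ldots,v_k$ to level $3$ from right to left. This already draws $C$ and the fan of edges between $u_1$ and $v_1,\ldots,v_{k'}$ planarly with maximum span equal to~$1$.

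Next I would decompose the remainder of the outer part of $G$ into maximal connected components, the \emph{blocks} $B_0,B_1,\ldots,B_t$ indicated in \Cref{fi:outer-cycle-c}. Because $G$ is $2$-outerplanar with inner cycle $C$, each block is an outerplanar graph whose mixed edges attach to a contiguous arc of $C$. For each block I would compute a $1$-SWLP drawing using the outerplanar routine, choosing the BFS root (inserting a dummy attachment edge when needed, as described in the paragraph preceding the theorem) so that the two extreme attachment vertices of the block are forced into prescribed positions on the top or bottom level of the block drawing, matching the positions of their $C$-neighbors along level $2$ or level $3$.

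Finally I would stack each block either above level $1$ or below level $3$, according to whether its arc of attachment lies on level $2$ or on level $3$; a block whose arc straddles both levels is split vertically and handled symmetrically to the fan of $u_1$, with its vertices distributed above level $1$ and below level $3$. The main obstacle is bounding the span of the resulting mixed edges: an outer vertex placed above level $1$ may need to reach an inner vertex on level $3$, yielding span at most $3$---the very same extremal case already arising for the edges $(u_1,u_2)$ and $(u_h,u_1)$ in \Cref{le:cycle-cycle}---and symmetrically for blocks placed below level $3$; edges internal to a block retain span at most $1$ by construction of its $1$-SWLP drawing. Hence every edge has span at most $3$, producing a $3$-SWLP drawing of $G$ and the announced upper bound.
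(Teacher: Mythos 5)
Your high-level strategy (show that $G$ is $3$-SWLP and invoke \Cref{le:general-tool-weakly}) matches the paper, but the decomposition you build on has a genuine gap. You take as blocks pieces of the outer part that still carry mixed edges to a contiguous arc of $C$, draw each block with the BFS-based $1$-SWLP routine for outerplanar graphs, stack it above level 1 or below level 3, and then assert that every mixed edge has span at most $3$. That assertion is unjustified: the controlled routine recalled before the theorem only fixes the position of one chosen \emph{edge} (alone on the first level, or with its endpoints leftmost on levels 1 and 2); it does not place an arbitrary prescribed arc of attachment vertices on a single level. In general the vertices of a block that are adjacent to $C$ lie at many different BFS depths, so after stacking the block a mixed edge may start several levels away from levels 2--3 and its span is unbounded --- already in the cycle-cycle case your recipe lays out the path $u_2,\dots,u_h$ as a staircase over many levels instead of on the single level used in \Cref{le:cycle-cycle}. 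The ``split vertically'' treatment of a block whose arc straddles both levels, and the routing of the outer edges that join distinct blocks to each other and to $u_1$, are also left unargued.

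The paper avoids all of this by decomposing the other way around: $G$ is split into a cycle-cycle core $G'$ consisting of the inner cycle together with the whole cycle of outer vertices facing it, so that \emph{all} mixed edges live inside $G'$, plus outerplanar flaps $B_0,\dots,B_h$ that contain no mixed edges and each share only a single edge $e_i$ with the outer boundary of $G'$. The core is drawn on four levels exactly as in \Cref{le:cycle-cycle}; each flap is then glued along $e_i$ with the controlled $1$-SWLP routine --- directly below level 4 when $e_i$ has span 0, and, when $e_i$ is one of the two span-3 edges $(u_1,u_2)$ or $(u_h,u_1)$, by drawing $B_i$ with $e_i$ leftmost on its first two levels and inserting two empty levels, so that only the flap's edges between its first two levels grow to span 3. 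To repair your argument you would either have to confine all mixed edges to such a four-level core, or prove that every attachment vertex of a block can be forced onto the level adjacent to the inner cycle; as written, neither is established, and the span bound of 3 for mixed edges is exactly the missing step.
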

\begin{proof}
	Let $G$ be an outerplanar-cycle. We prove that $G$ is a $3$-SWLP which, by \Cref{le:general-tool-weakly}, implies the statement. We can decompose $G$ into a cycle-cycle $G'$ plus a set of outerplanar graphs $B_0, B_1, \dots B_h$ such that each $B_i$ shares an edge $e_i$ with the outer boundary of $G'$. See \Cref{fi:outer-cycle-2} for an illustration. We compute a $3$-SWLP drawing $\Gamma'$ as explained in the proof of \Cref{le:cycle-cycle}. We now explain how to add each component $B_i$, for $i=1,2,\dots,h$. If the edge $e_i$ shared by $G'$ and $B_i$ has span zero, i.e., it connects consecutive vertices on level 4, we compute, as explained above, a $1$-SWLP drawing $\Gamma_i$ of $B_i$ with only edge $e_i$ on level $1$. We  glue $\Gamma_i$ and $\Gamma'$ by making the two copies of $e_i$ coincident (the first level of $\Gamma_i$ coincides with the last level of $\Gamma$). Suppose now that $e_i$ is one of the two edges of $G'$ that connect the vertex $u_1$ of the first level to the vertex $u_2$ or $u_h$ of the last level; in particular assume that $e_i$ coincides with $(u_1,u_2)$ (the other case is symmetric). In this case $(u_1,u_2)$ connects the rightmost vertex of level 1 to the rightmost vertex of level 4. We compute, as explained above, a $1$-SWLP drawing $\Gamma_i$ such that the copy of $e_i$ in $\Gamma_i$ connects the leftmost vertex of level 1 to the leftmost vertex of level 2. By adding to $\Gamma_i$ two empty levels between level 1 and 2, the edge $e_i$ connects the leftmost vertex of level 1 to the leftmost vertex of level 4 in $\Gamma_i$. Hence, we can glue $\Gamma_i$ and $\Gamma'$ by making the two copies of $e_i$ coincident. The addition of the two empty levels to $\Gamma_i$ increases the span of all edges that connected vertices of the first level to the second level of $\Gamma_i$ from one to three.
\end{proof}

\noindent \textbf{Outerplanar-caterpillar graphs.} An \emph{outerplanar-caterpillar graph} is a $2$-outerplanar graph such that the inner vertices induce a caterpillar. We first handle the case of cycle-caterpillar, i.e., the case when the outer vertices induce a simple cycle. The general case can be handled similarly as in  \Cref{th:outerplanar-cycle}.  

\begin{restatable}{lemma}{leCycleCaterpillar}\label{le:cycle-caterpillar}
    Every cycle-caterpillar graph has a $4$-SWLP drawing. 
\end{restatable}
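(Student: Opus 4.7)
The plan is to adapt the construction of \Cref{le:cycle-cycle} by placing the spine of the inner caterpillar $T$ on a central level and distributing the legs and the outer vertices on the levels immediately above and below it.

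I would fix a planar embedding of $G$ with the outer cycle $C$ as the outer face and let $s_1, s_2, \ldots, s_p$ denote the spine of $T$. In this embedding, each leg of a spine vertex $s_i$ lies either on the top side or on the bottom side of the path $s_1 s_2 \cdots s_p$; correspondingly, $C$ splits into a top arc and a bottom arc joined by two transition edges incident to the extremes of the spine. I would then place the vertices on five consecutive levels: $s_1, \ldots, s_p$ on level~$3$ in left-to-right order; the top legs of each $s_i$ on level~$2$ directly above $s_i$; the bottom legs of each $s_i$ on level~$4$ directly below $s_i$; the outer vertices of the top arc on level~$1$; and those of the bottom arc on level~$5$. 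The orderings of legs on levels~$2$ and~$4$ would follow the rotation systems of the spine vertices, and the orderings of outer vertices on levels~$1$ and~$5$ would follow the walks along the top and bottom arcs of $C$.

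A case analysis on edge types then bounds the spans: spine-spine edges have span~$0$; spine-leg and leg-outer edges cross a single level and hence have span~$1$; spine-outer edges cross two levels and thus have span~$2$; outer-outer edges within one arc have span~$0$; and the two transition edges of $C$, which go from level~$1$ to level~$5$, have span~$4$. The maximum span is therefore $4$, yielding the desired $4$-SWLP drawing.

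The main obstacle is to argue planarity rigorously, particularly at the endpoint spine vertices $s_1$ and $s_p$, where the outer cycle wraps around so that the same vertex may see legs and outer neighbours on both the top and the bottom side. To route the transition edges of $C$ without crossings, the two transition outer vertices must be placed at the extreme left and right of their respective levels, and the legs of $s_1$ and $s_p$ must be drawn consistently with the rotation around these transitions. A formal verification would proceed by showing that the rotation around each vertex in the layout matches the rotation in the given planar embedding, which can be ensured by ordering each level according to the appropriate walk along $C$ or along the spine; the subtle point is in handling the mixed edges incident to $s_1$ or $s_p$, where I would exploit the fact that the transition outer vertices sit at the extremes of their levels to anchor these edges on the correct sides.
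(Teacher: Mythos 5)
Your layout is the same as the paper's: spine on the central level, legs on the two adjacent levels, the outer cycle split into two arcs on the outermost levels, and the same span accounting (transition edges of $C$ spanning from level~$1$ to level~$5$ give the bound $4$). The gap is in the planarity argument you flag at the end, and it is not a formality: your premise that $C$ splits into a top arc and a bottom arc ``joined by two transition edges'' is exactly the easy case. In general a \emph{single} outer vertex $u$ can be adjacent to inner vertices on both sides of the spine (its mixed edges wrap around $v_1$ or $v_k$), and there is then no clean transition edge --- $u$ itself is the transition point and must be assigned to level~$1$ or level~$5$. Whichever choice you make, all of $u$'s mixed edges approach the spine from one side, so either its ``left'' or its ``right'' mixed edges must be re-embedded on the opposite side of the spine. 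Consequently your proposed verification strategy --- ``showing that the rotation around each vertex in the layout matches the rotation in the given planar embedding'' --- cannot succeed: the rotation provably changes at such a vertex.

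What the completed argument needs, and what the paper supplies, is (i) the observation that at most two outer vertices (one near each end of the spine) can have mixed edges on both sides; (ii) a rule for assigning each such vertex $u$ to level~$1$ or~$5$ based on whether its farthest right neighbour along the spine lies to the right of its farthest left neighbour, which guarantees that the flipped edges nest under the unflipped ones and do not cross mixed edges of other outer vertices; and (iii) a re-placement of the caterpillar leaves adjacent to $u$, since after the flip they may no longer sit inside a face of the original embedding --- one checks that any such leaf is adjacent only to $u$ and one spine vertex, so it fits in a triangular face of the modified drawing. Without these three steps the construction is only verified for caterpillar-cycle graphs in which no outer vertex sees both sides of the spine.
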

\begin{proof}
	Let $G$ be a cycle-caterpillar and denote by $u_1,u_2,\dots,u_h$ (with $h \geq 3$) the vertices of the outer cycle in clockwise order. Let $T$ be the internal caterpillar of $G$ and let $P=\langle v_1,v_2,\dots,v_k \rangle$ be the spine of $T$. 
	Similar to the case of outerplanar-cycle graphs we can assume that the graph is maximal with respect to the mixed edges, i.e., if an outer vertex and an inner vertex can be connected by an edge without creating multiple edges, we connect them.
	Let $G'$ be the graph obtained by removing all vertices that are leaves of $T$. We classify every mixed edge $e$ of $G'$ as follows. If edge $e$ is incident to $v_1$ or to $v_k$ we say that $e$ is \emph{extreme}; if $e$ is incident to the left of $P$ when walking from $v_1$ to $v_k$, we say that $e$ is a \emph{left edge}; if $e$ is incident  to the right of $P$ we say that $e$ is a \emph{right edge}. 
	
	 \begin{figure}
		     \centering
		     \subfigure[]{\label{fi:outer-cater-a}\includegraphics[width=0.32\textwidth,page=1]{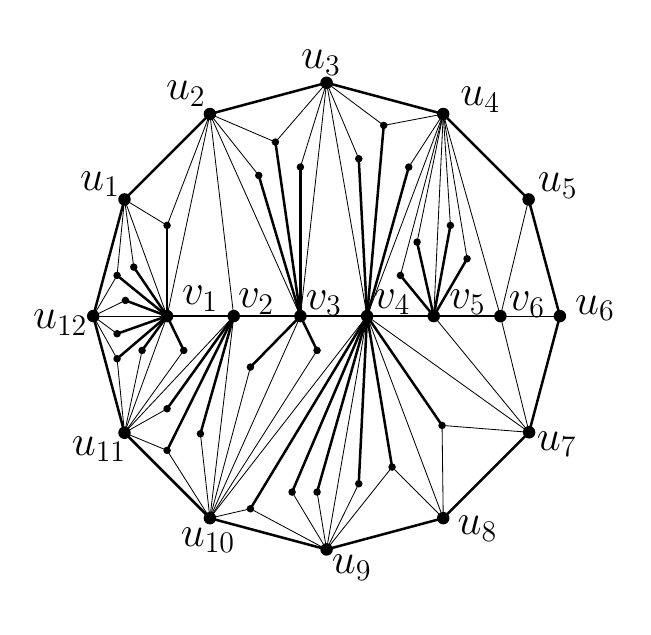}}
		     \subfigure[]{\label{fi:outer-cater-b}\includegraphics[width=0.32\textwidth,page=2]{figures/outer-cater.pdf}}
		     \subfigure[]{\label{fi:outer-cater-c}\includegraphics[width=0.32\textwidth,page=3]{figures/outer-cater.pdf}}
		     \caption{(a) A cycle-caterpillar $G$; (b) The subgraph $G'$ obtained by removing the leaves of the caterpillar; (c) A weakly $4$-span level planar drawing of $G$.}
		     \label{fig:outer-cater-1}
		 \end{figure}
	
	Assume first that no outer vertex has both a left and a right incident edge. In this case, by the maximality assumption, the internal faces of $G'$ are triangles. Further, there are two consecutive outer vertices that are both adjacent to $v_1$ and two consecutive outer vertices that are both adjacent to $v_k$. Without loss of generality assume that the two consecutive outer vertices adjacent to $v_1$ are $u_h$ and $u_1$ and let $u_i$ and $u_{i+1}$ be the two consecutive outer vertices adjacent to $v_k$. We assign vertices $u_1,u_2,\dots,u_j$ to level 1 in this order from left to right; we then assign vertices $u_{j+1},u_{j+2},\dots,u_h$ to level 5 in this order from right to left; finally we assign vertices $v_1,v_2,\dots,v_k$ to level 3 in this order from left to right. It is immediate to see that the resulting drawing is a $4$-SWLP drawing of $G'$ (actually, it is a $2$-SWLP drawing since levels $2$ and $4$ are empty). We now add the vertices of $G \setminus G'$. Each of these vertices is a leaf of $T$ and it is embedded in $G$ inside a triangular inner face of $G'$. Since every triangular inner face of $G'$ intersects level 2 or level 4 (or both), we can draw each of these vertices inside its triangle and connect it to its adjacent vertices. Note that in each triangle we can have many leaves adjacent to both end-vertices of one of the two mixed edges of the face, and at most one leaf adjacent to all the three vertices of the face (this can happen only if two of the three vertices of the face are outer vertices).
	
	\begin{figure}[htbp]
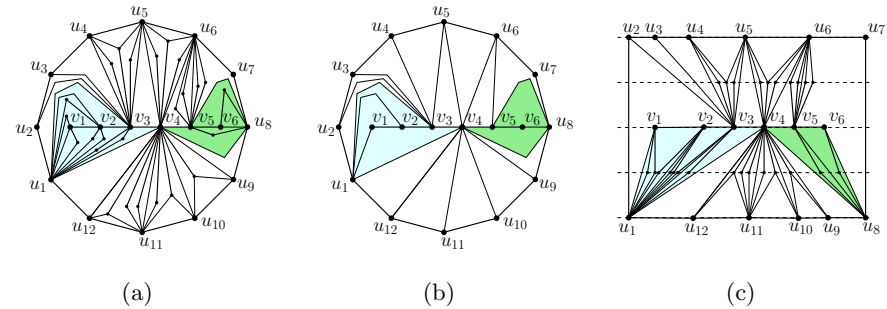

		\centering
		\subfigure[]{\label{fi:outer-cater-d}\includegraphics[width=0.32\textwidth,page=4]{figures/outer-cater.pdf}}
		\subfigure[]{\label{fi:outer-cater-e}\includegraphics[width=0.32\textwidth,page=5]{figures/outer-cater.pdf}}
		\subfigure[]{\label{fi:outer-cater-f}\includegraphics[width=0.32\textwidth,page=6]{figures/outer-cater.pdf}}
		\caption{(a) A cycle-caterpillar $G$; the edges of the outer vertices with both left and right inner edges are highlighted;  (b) The subgraph $G'$ obtained by removing the leaves of the caterpillar; (c) A $4$-SWLP drawing of $G$.}
		\label{fig:outer-cater-2}
	\end{figure}
	
	Suppose now that in $G'$ there are vertices that have both left and right incident edges. Notice that, there can be at most two of these vertices. One of them, say $u_1$, is adjacent to $v_1$; all the inner edges that precede $(u_1,v_1)$ in clockwise order around $u_1$ are left edges, while those that follow $(u_1,v_1)$ in clockwise order around $u_1$ are right edges. The other vertex with left and right incident edges, say $u_i$, is adjacent to $v_k$. Also in this case, the inner edges that precede $(u_i,v_k)$ around $u_i$ are right edges, while those that follow $(u_i,v_k)$ around $u_i$ are left edges. Let $v_{l_1}$ (resp. $v_{r_1}$) be the last vertex along $P$ for which there exists a left (resp. right) edge $(u_1,v_{l_1})$ (resp. $(u_1,v_{r_1})$). Analogously, Let $v_{l_2}$ (resp. $v_{r_2}$) be the first vertex along $P$ for which there exists a left (resp. right) edge $(u_i,v_{l_1})$ (resp. $(u_i,v_{r_1})$). The assignment of the vertices of $G'$ to the levels is analogous the the previous case, the only difference is that, we have to decide whether $u_1$ is assigned to level $1$ or to level $5$ and similarly for $u_i$. Notice that, any of the two choices implies to change the embedding of either the left edges or of the right edges incident to $u_1$ and similarly for $u_i$. This change of the embedding does not create any problem to draw the edges incident to $u_1$ and to $u_i$ but it can create a crossing between some of these edges and edges incident to other outer vertices. To avoid these problems we assign $u_1$ to level 1 if $v_{r_1}$ is to the right of $v_{l_1}$ along $P$ and to level $5$ otherwise. Similarly, we assign $u_i$ to level 1 if $v_{r_2}$ is to the right of $v_{l_2}$ along $P$ and to level $5$ otherwise. If $i\neq 2$, all the vertices $u_2,u_3,\dots,u_{i-1}$ are assigned to level 1; if $i \neq h$ and all vertices $u_{i+1},u_{i+2},\dots,u_h$ are assigned to level $5$. Note that it cannot be $i=2$ and $i=h$ at the same time, because the outer face of $G$ and therefore of $G'$ has at least three vertices. The vertices of $v_1,v_2,\dots,v_k$ are assigned to level 2 in this left-to-right order. It is easy to see that the resulting drawing is a $4$-SWLP of $G'$.
	
	The addition of the vertices that are leaves of $T$ is analogous to the previous case. Note that in the present case the leaves adjacent to $u_1$ or to $u_i$ may need to be drawn inside a triangular face that does not exists in the original $G'$, because we changed the embedding and moved either the left edges to the right of $P$, or the right edges to the left of $P$. However, any such leaf is only adjacent to a vertex $v_j$ of the spine and to $u_1$ (resp. to $u_i$), thus it can be drawn inside a triangular face whose boundary contains both $u_1$ (resp. $u_i$) and $v_j$.
	Finally, when there is only one outer vertex with incident mixed edges that are both left and right edges is handled similar to the cases above.
\end{proof}

\begin{restatable}[{\hyperref[th:outerplanar-caterpillar*]{$\star$}}]{theorem}{thOuterplanarCaterpillar}\label{th:outerplanar-caterpillar}
If $G$ is an outerplanar-caterpillar graph, then $\rho_{g}(G) \leq 9$.   
\end{restatable}
\begin{proof}
	If $G$ is a cycle-caterpillar, the result follows from \Cref{le:cycle-caterpillar}. Else, $G$ can be decomposed into a cycle-caterpillar $G'$ plus a set of outerplanar graphs $B_0, \dots, B_h$ such that each $B_i$ shares an edge $e_i$ with the outer boundary of $G'$. This case can be handled similar to the case shown in the proof of \Cref{th:outerplanar-cycle}.
\end{proof}

\section{Concluding Remarks and Open Problems}
Our lower bound holds for planar graphs with treewidth three. On the other hand, the upper bounds are for families with treewidth larger than two. Since partial 2-trees have local edge-length ratio at most four~\cite{DBLP:journals/ijcga/BlazjFL21}, we find it interesting to further study families of planar graphs with constant treewidth larger than two and constant local edge-length ratio.


\bibliographystyle{splncs04}
\bibliography{biblio}
\end{document}